\documentclass[letterpaper,11pt]{article}

% === Encoding and Fonts ===
\usepackage[utf8]{inputenc} % safer than latin1
\usepackage[T1]{fontenc}

% === Math and Symbols ===
\usepackage{amsmath, amssymb, amsthm, mathtools, bbm, nicefrac}

% === Algorithms ===
\usepackage[noend]{algorithmic}
\usepackage[linesnumbered,ruled,vlined]{algorithm2e}
\usepackage{pgf}
\usepackage{graphicx}
\usepackage{caption}
% === Graphics and Figures ===
\usepackage{graphicx}
\usepackage{caption}
\usepackage{subcaption}
\usepackage{tikz, pgfplots}
\pgfplotsset{compat=newest}
\usetikzlibrary{patterns,calc, arrows,shapes,automata,calc,backgrounds,petri,snakes}
\usetikzlibrary{matrix, positioning}
\usepackage[draft]{changes} % or 'draft' for visible changes
\definechangesauthor[name={Masoud}, color=blue]{ms}
\definecolor{boundA}{rgb}{0.6, 0.85, 0.6}  % light green
\definecolor{boundB}{rgb}{0.6, 0.7, 0.95}  % light blue

% === Text Formatting and Layout ===
\usepackage{fullpage}
\usepackage[margin=1in]{geometry}
\usepackage{setspace}
\usepackage{xspace}
\usepackage{color}
\definecolor{crimsonglory}{rgb}{0,0,0} % Disable red draft color
\usepackage[normalem]{ulem} % for strikethrough

% === Lists and Tables ===
\usepackage{enumitem}
\usepackage{threeparttable}

% === Citations ===

% === Hyperlinks ===
\usepackage[colorlinks=true, linkcolor=blue, citecolor=blue, urlcolor=blue]{hyperref}

% === Theorems ===
\theoremstyle{plain}
\newtheorem{theorem}{Theorem}[section]
\newtheorem{lemma}[theorem]{Lemma}

\newtheorem{definition}[theorem]{Definition}

% === Math Utilities ===

\DeclarePairedDelimiter\floor{\lfloor}{\floor}

% === Misc Utilities ===
\usepackage{thm-restate}
\usepackage{cleveref}
\crefname{LP}{LP}{LPs}
\crefname{lemma}{Lemma}{Lemmas}

% === Custom Commands ===

\newcommand{\ignore}[1]{}

% === Algorithm2e customization ===

\SetCommentSty{mycommfont}
\SetKwInput{KwInput}{Input}
\SetKwInput{KwOutput}{Output}

% === Custom Environment ===
\newcounter{proccnt}

% === Proof Environment Customization ===
\makeatletter
\def\GrabProofArgument[#1]{ #1: \egroup\ignorespaces}
\def\proof{\noindent\textbf\bgroup Proof%
	\@ifnextchar[{\GrabProofArgument}{. \egroup\ignorespaces}}

\makeatother

% === Title & Authors ===
\title{Fair Assignment of Indivisible Chores to Asymmetric Agents}
\author{Masoud Seddighin \and Saeed Seddighin}
\date{}

% === Main Document ===
\newcommand{\MMS}{\textsf{MMS}}
\newcommand{\WMMS}{\textsf{WMMS}}
\newcommand{\entitlement}{\textsf{w}}
\newcommand{\minp}{\textsf{minp}}
\newcommand{\kone}{\ell}
\newcommand{\ktwo}{r}
\newcommand{\best}{\mathcal{F}}
\newcommand{\cost}{V}
\newcommand{\chores}{M}
\newcommand{\agents}{N}
\newcommand{\alloc}{A}
\newcommand{\agent}{a}

\newcommand{\chore}{b}

\newcommand{\vthree}{2.11222}
\newcommand{\vfour}{2.52756}
\newcommand{\vfive}{2.73205}
\newcommand{\vsix}{3.04882}
\newcommand{\vseven}{3.2842}
\newcommand{\veight}{3.5134}
\newcommand{\vnine}{3.72934}
\newcommand{\vten}{4.0352}

\begin{document}
	\maketitle
	\thispagestyle{empty}
	
	\begin{abstract}
		We consider the problem of assigning indivisible chores to agents with different entitlements in the maximin share value (\MMS) context. While constant-\MMS\ allocations/assignments are guaranteed to exist for both goods and chores in the symmetric setting, the situation becomes much more complex when agents have different entitlements. For the allocation of indivisible goods, it has been proven that an $n$-\WMMS\ (weighted \MMS) guarantee is the best one can hope for. For indivisible chores, however, it was recently discovered that an $O(\log n)$-\WMMS\ assignment is guaranteed to exist. In this work, we improve this upper bound to a constant-\WMMS\ guarantee.\footnote{We prove the existence of a 20-\WMMS\ assignment, but we did not attempt to optimize the constant factor. We believe our methods already yield a slightly better bound with a tighter analysis.}
	\end{abstract}
	\section{Introduction}
In this work, we study the problem of fair assignment of indivisible chores to $n$ agents who have \textbf{unequal entitlements}. Fair allocation/assignment is a fundamental problem that has been extensively studied in both Computer Science and Economics~\cite{Procaccia:first,Procaccia:second,amanatidis2015approximation,caragiannis2016unreasonable,ghodsi2018fair,aziz2016approximation}. The origins of this problem date back to 1948, when Steinhaus introduced the classic cake-cutting problem: given $n$ agents with distinct valuation functions over a cake, can we divide the cake so that each agent receives a piece they value as at least $1/n$ of the whole cake? Steinhaus provided a positive answer using the elegant moving knife method. Although the basic problem has a straightforward solution, many variants have emerged over the years, several of which remain open despite decades of research~\cite{brams1996fair,robertson1998cake,Procaccia:first,aziz2022algorithmic,procaccia2013cake,guo2023survey}. Recently, attention has shifted to the fair allocation/assignment of indivisible goods/chores, which differs from the continuous nature of cake-cutting. Instead of dividing a divisible resource such as a cake, we consider a set of indivisible items to be allocated/assigned among $n$ agents. A direct proportional division is often infeasible due to obvious counterexamples such as when there is only one item to assign.

The dual problem to fair allocation of goods is called fair assignment of chores which is the  focus of this work. In this variant, instead of goods that are desirable for the agents, we have chores that incur costs to the agents. In other words, we aim to divide a set of chores among a number of agents in a fair manner. That is, no agent is assigned to more than a fair load of chores. Our work focuses  on the indivisible chores setting and uses maximin value (\MMS) as the notion of fairness proposed by Budish~\cite{Budish:first}.

Let $\agents$ be a set of $n$ agents, and $\chores$ be a set of $m$ chores. We denote the agents by $\agents=\{\agent_1, \agent_2, \ldots, \agent_n\}$ and chores by $\chores = \{\chore_1, \chore_2,\ldots,\chore_m\}$. Each agent $\agent_i$ has an \textit{additive} function $\cost_i$ over the chores which denotes her cost for each subset of chores. For the symmetric setting (agents with equal entitlements), let $\Pi(\chores)$ denote the set of all $n$-partitionings of the chores. The maximin  value of agent $\agent_i$ (denoted by $\MMS_i$) is defined as
\begin{equation*}\label{eq:1}
	\MMS_i = \min_{\langle A_1, A_2, \ldots, A_n\rangle \in \Pi(\chores)} \max_{\agent_j \in \agents} \cost_i(A_j).
\end{equation*}

In this setting, our aim is to find an assignment of chores to the agents in a way that the cost of each agent for the chores she is assigned to is bounded by a multiplicative factor of her \MMS\ value. The \MMS\ value  serves as a natural benchmark—since, even when all agents have identical valuations, one agent must be assigned to a bundle at least as costly as her \MMS\ value. Thus, the question that arises in this setting is whether there are assignments for which the cost of the chores given to agents are bounded by a constant factor of their \MMS\ value. The best upper bound discovered so far is $13/11$~\cite{huang2023reduction}.

While the case of equal entitlements is widely studied for both fair allocation and chore assignment problems, the case of unequal entitlements arise frequently in real-world scenarios. For example, inheritance laws in various cultures and religions often mandate unequal divisions. 
Similarly, division of natural resources—like oil  or fisheries—is typically based on geographic, economic, or political considerations. These practical needs underscore the importance of studying fair allocation/assignment under unequal entitlements.

While constant-approximation guarantees are straightforward for the symmetric settings~\cite{aziz2017algorithms}, the problem becomes much more challenging when agents have different entitlements.
 In this version, each agent $\agent_i$ has an entitlement $\entitlement_i$ such that $\sum \entitlement_i = 1$. When agents have unequal entitlements, the previous definition of \MMS\ values does not take into account the entitlement of the agents. To address this, the following interpretation of \MMS\ values for the asymmetric setting known as \WMMS\ values is proposed by Aziz, Chen, and Li~\cite{aziz2019weighted} (this definition is based on a similar definition for the case of fair allocation of indivisible goods proposed by Farhadi \textit{et al.}~\cite{farhadi2017fair}).
 \begin{equation*}
 	\WMMS_i = \min_{\langle A_1, A_2, \ldots, A_n\rangle \in \Pi(\chores)} \max_{\agent_j \in \agents} \cost_i(A_j)\frac{\entitlement_i}{\entitlement_j}.
\end{equation*}
We define an assignment as being \emph{$\alpha$-\WMMS} if all chores are assigned to agents and each agent $\agent_i$ receives a bundle $S_i$ of chores such that $\cost_i(S_i) \leq \alpha \cdot \WMMS_i$. While for both goods and chores, constant-\MMS\ allocations/assignments are guaranteed to exist for the symmetric setting, the situation is much more complex when agents have different entitlements. For  goods, it has been proven that $n$-\WMMS\ is the best guarantee one can hope for~\cite{farhadi2017fair}. 

Aziz \textit{et al.}~\cite{aziz2019weighted} initiate the study of indivisible chores with unequal entitlements.
 They show that commonly-used algorithms that work well for the allocation of goods to asymmetric agents, and even for chores to symmetric agents do not provide any desirable guarantee for the asymmetric chore division problem. In addition, they show that a $4/3$-\WMMS\ assignment exists for the special case of two agents. Subsequent work by Wang, Li, and Lu~\cite{wang2024improved} discovered that an $O(\log n)$-\WMMS\ assignment exists for any number of agents. In addition, they prove that the optimal assignment for the case of two agents is $\frac{\sqrt{3}+1}{2}$-\WMMS.

Our main contribution is a proof for the existence of a constant-\WMMS\ assignment for any number of agents. Our result is based on a novel \textit{layered moving knife} algorithm that extends the moving knife to the asymmetric setting.
\begin{table}[t]
	\centering
	\small
	\scalebox{0.93}{
		\begin{tabular}{|l|c|c|c|c|}
			\hline
			& general $n$ & $n = 2$ & $n = 3$ & $n = 4$ \\
			\hline 
			Previous & $\log n + 1$ & $\frac{\sqrt{3}+1}{2}$ & -- & -- \\
			work     & \cite{wang2024improved} & \cite{wang2024improved} & & \\
			\hline
			Our      & 20 & -- & $\approx 2.1122$ & $\approx 2.5404$ \\
			result   & Theorem~\ref{theorem:main} & & Theorem~\ref{thm:3agent} & Theorem~\ref{thm:4agents} \\
			\hline
		\end{tabular}
	}
	\caption{Comparison of our results with prior work.}
\end{table}

\vspace{0.2cm}
{\noindent \textbf{Theorem} \ref{theorem:main} [restated]. \textit{The asymmetric chore division problem admits a 20-\WMMS\ assignment.\\}}

To prove Theorem \ref{theorem:main}, we first show that we can assume without loss of generality that if each agent sorts the chores based on her cost function, the ordering would be the same for all agents. At a nutshell, our layered moving knife algorithm divides the agents into three categories: 
 Dead agents, agents in progress, and agents in queue.
We sort the chores in a particular order (more details can be found in Section~\ref{sec:constant}) and iterate over the chores one by one. We also construct a multiset that only contains agents in progress but each such agent may appear several times in that multiset. Each time we add the corresponding chore to a bag as long as there is one agent in our multiset whose cost for the bag is bounded by a certain value. Once no other chore can be added to the bag, we give all chores of the bag to one agent of our multiset and remove that agent from the multiset. Once the multiset is empty, we move all agents in progress to the dead agents and bring some of the agents in queue into agents in progress. We prove in Section~\ref{sec:constant} that layered moving knife provides a 20-\WMMS\ assignment (under assumptions that are without loss of generality).

We remark that the factor $20$ of our upper bound is not tight. In fact, we sacrifice some of the optimizations in favor of simplicity and thus we believe that a tighter analysis of the layered moving knife algorithm would yield a better bound. Although this is a theoretically significant improvement over the previous $\log n+1$ upper bound~\cite{wang2024improved}, the previous bound only improves for scenarios with $n > 2^{19} = 524288$ agents. Therefore, a natural question that one may ask is whether we can improve the previous bound when the number of agents is small. It is already proven that for $n=2$, the best upper bound is equal to $\frac{\sqrt{3}+1}{2}$~\cite{wang2024improved}. However, not much is known beyond the existing upper bound of $\log n+1$ ~\cite{wang2024improved} for $n > 2$.

We also make progress on this front. We devise a \textit{chore-oblivious analysis} for the asymmetric chore-division problem that improves the  $\log n+1$ upper bound for $n=3$ and $n=4$. We also present empirical results and show that up to $n=10$, our chore oblivious technique can be used to obtain a better bound than $\log n +1$. Let us define the notation $\best(\langle \entitlement_1, \entitlement_2, \ldots, \entitlement_n\rangle)$, which denotes the best approximation factor one can prove for the asymmetric chore division problem when the entitlements are equal to $\entitlement_1, \entitlement_2, \ldots, \entitlement_n$. Based on this, we can borrow the following inequalities from previous work:
\begin{center}
	\renewcommand{\arraystretch}{1.1}
	\begin{tabular}{lll}
		\(\best(\langle 1 \rangle)\) & \(= 1\) \\
		\(\best(\langle 1/2, 1/2 \rangle)\) & \(= 1\) \\
		\(\best(\langle w, 1-w \rangle)\) & \(\leq \frac{\sqrt{3}+1}{2}\)&~\cite{wang2024improved} \\
		\(\best(\langle 1/3, 1/3, 1/3 \rangle)\) & \(\leq \frac{15}{13}\)&~\cite{huang2023reduction} \\
		\(\best(\langle \entitlement_1, \ldots, \entitlement_n \rangle)\) & \(\leq \frac{20}{17}\), for \(4 \leq n \leq 7\)&~\cite{huang2023reduction} \\
		\(\best(\langle \entitlement_1, \ldots, \entitlement_n \rangle)\) & \(\leq \frac{13}{11}\), for \(n > 7\)&~\cite{huang2023reduction}
	\end{tabular}
\end{center}

We next present two simple reductions that derive new bounds beyond those listed above.

\vspace{0.2cm}
{\noindent \textbf{Lemma} \ref{red1} [restated]. \textit{	Let \(\entitlement_1, \entitlement_2, \ldots, \entitlement_n\) denote the entitlements of \(n\) agents. Suppose the agents are partitioned into \(n'\) disjoint groups, and for each group \(i \in [n']\), let \(i^* \in [n]\) be the index of a \textbf{unique} designated representative agent (not necessarily in group \(i\)). Let \(\alpha > 1\), and assume the total entitlement in group \(i\) is at most \(\alpha \entitlement_{i^*}\), i.e.,
$
			\sum_{j \in \text{group } i} \entitlement_j \leq \alpha \entitlement_{i^*}.
$		
		Then, for \(\beta = 1 / \left( \sum_{i=1}^{n'} \entitlement_{i^*} \right)\), we have
		\[
		\best(\langle\entitlement_1, \entitlement_2, \ldots, \entitlement_n\rangle) \leq \alpha \cdot \best(\langle\beta \entitlement_{1^*}, \beta \entitlement_{2^*}, \ldots, \beta \entitlement_{{n'}^*}\rangle).
		\]
}}
{\noindent \textbf{Lemma} \ref{red2} [restated]. \textit{	Let \(\entitlement_1, \entitlement_2, \ldots, \entitlement_n\) be the entitlements of \(n\) agents. Then:
		\[
		\best(\langle\entitlement_1, \entitlement_2, \ldots, \entitlement_n\rangle) \leq \frac{\max_i \entitlement_i}{\min_i \entitlement_i} \cdot \best(\langle \tfrac{1}{n}, \tfrac{1}{n}, \ldots, \tfrac{1}{n} \rangle).
		\]}}
Via   Lemmas~\ref{red1} and ~\ref{red2}, we prove that for $n=3$ and $n=4$ the existing bounds can be improved. While this method does not offer an upper bound better than $\Omega(\sqrt{n})$ in general, we use it to improve the existing upper bound for small values of $n$. More precisely, we show that for $n=3$ and $n=4$,  $\best(\langle \entitlement_1, \entitlement_2, \ldots, \entitlement_n \rangle) $ would be bounded by certain upper bounds for any sequence of entitlements $\entitlement_1, \entitlement_2, \ldots, \entitlement_n$ that sum to $1$.  In addition to this, for $3 \leq n \leq 10$ we sample a billion set of entitlements that sum to 1 uniformly at random and use our chore-oblivious technique to bound the value of $\best$ for each of those instance and present the highest number achieved in our experiments. While for $n=3$ our result is tight for chore-oblivious analysis, our empirical results show that for $4 \leq n \leq 10$, the existing bounds can be improved via chore-oblivious analysis.

\begin{figure*}[t]
	\centering
	\begin{minipage}{0.55\textwidth}
		\centering
		\begin{tikzpicture}
			\begin{axis}[
				xlabel={$n$},
				ylabel={Value},
				legend style={at={(0.5,1.05)}, anchor=east},
				grid=major,
				xtick={3,...,10},
				width=\textwidth,
				height=5cm,
				ymajorgrids=true,
				xmajorgrids=true,
				title style={font=\footnotesize}
				]
				\addplot+[mark=*, thick] table {
					n  f
					3  \vthree
					4 \vfour
					5 \vfive
					6 \vsix
					7 \vseven
					8 \veight
					9 \vnine
					10 \vten
				};
				\addlegendentry{Chore-oblivious bound}
				
				\addplot+[mark=square*, thick] table {
					n  g
					3 2.584
					4 3.0
					5 3.321
					6 3.584
					7 3.807
					8 4.0
					9 4.169
					10 4.321
				};
				\addlegendentry{$\log n+1$}
			\end{axis}
		\end{tikzpicture}
	\end{minipage}
	\hfill
	\begin{minipage}{0.3\textwidth}
		\centering
		\footnotesize
		\renewcommand{\arraystretch}{1.2}
		\begin{tabular}{|c|c|c|}
			\hline
			$n$ & Bound & $\log n+1$ \\
			\hline
			3 & \vthree & 2.584 \\
			4 & \vfour & 3.0 \\
			5 & \vfive & 3.321 \\
			6 & \vsix & 3.584 \\
			7 & \vseven & 3.807 \\
			8 & \veight & 4.0 \\
			9 & \vnine & 4.169 \\
			10 & \vten & 4.321 \\
			\hline
		\end{tabular}
	\end{minipage}
	\caption{Comparison of chore-oblivious bounds and $\log n+1$ values}
	\label{fig:bound_comparison}
\end{figure*}

	\section{Related Work}
Fair allocation/assignment of indivisible items encompasses a broad range of settings, fairness notions, and practical scenarios. Several surveys provide comprehensive overviews of this area. For a focused review on the allocation of indivisible chores, see ~\cite{guo2023survey}. Broader discussions that cover both goods and chores can be found in ~\cite{aziz2022fair,aziz2022algorithmic} and ~\cite{amanatidis2022fair}. For classical perspectives on fair division and cake-cutting problems, we refer to ~\cite{procaccia2013cake} and ~\cite{brams1996fair}. Here, we focus  on chores and the maximin share notion.

Much of the main results on maximin share allocations of indivisible chores—particularly in the asymmetric setting—has been discussed in the introduction. Here, we highlight additional results related to maximin share notion for chores. Several results establish \MMS\ guarantees under specific assumptions on the number of agents, the number of chores, or the structure of cost functions. Hummel~\cite{hummel2023lower} proved that for any constance \( c \), there exists a large enough \( n \) such that every instance with \( n \) agents and \( n + c \) chores admits an $\MMS$ allocation.
Barman \textit{et al.}~\cite{barman2023fair} showed that \MMS\ allocations are guaranteed when agents have binary supermodular cost functions. Feige \textit{et al.}~\cite{feige2021tight} demonstrated that for three agents and nine chores, at least one agent receives a bundle with cost at least \( \frac{44}{43} \) of their \MMS\ value. In the context of ordinal approximation, Hosseini \textit{et al.}~\cite{hosseini2022ordinal} showed that a relaxation of $\MMS$ called \( 1 \)-out-of-\( \lfloor \frac{3}{4}n \rfloor \) \MMS\ allocation always exists.

Related to our study is the concept of Anyprice Share, introduced by Babaioff, Ezra, and Feige ~\cite{babaioff2024fair}, which applies to fair allocation settings with arbitrary entitlements. Recently, Babaioff and Feige ~\cite{babaioff2025share} proposed a general framework for share-based fairness under arbitrary entitlements, encompassing many share-based notions including  maximin share. They designed randomized algorithms that offer both ex-ante and ex-post guarantees relative to any specified share. Building on this framework, Feige and Grinberg~\cite{feige2025fair} developed improved randomized algorithms for maximin share allocations under subadditive and fractionally subadditive valuations and arbitrary entitlements.

	\section{Preliminaries}
Let $\agents$ be a set of $n$ agents, and $\chores$ be a set of $m$ chores. We denote the agents by $\agents=\{\agent_1, \agent_2, \ldots, \agent_n\}$ and chores by $\chores = \{\chore_1, \chore_2,\ldots,\chore_m\}$. Each agent $\agent_i$ has an \textit{additive} function $\cost_i$ over the chores which denotes her cost for each subset of chores. Additionally, each agent $\agent_i$ has an entitlement $0 < \entitlement_i \leq 1$ such that the entitlements sum to $1$. We assume without loss of generality that $\entitlement_1 \leq \entitlement_2 \leq \ldots \leq \entitlement_n$.

Since our model generalizes the notion of the maximin share, we begin by recalling its definition for equal entitlements, as proposed by Budish~\cite{Budish:first}. In this case, we assume that all entitlements are equal to $1/n$. Let $\Pi(\chores)$ denote the set of all $n$-partitionings of the chores. The maximin  share value of agent $\agent_i$ (denoted by $\MMS_i$) is defined as
\begin{equation}\label{eq:1}
	\MMS_i = \min_{\langle A_1, A_2, \ldots, A_n\rangle \in \Pi(\chores)} \max_{\agent_j \in \agents} \cost_i(A_j).
\end{equation}

This measurement can be interpreted as the outcome of a cautious agent in a divide-and-choose procedure where the agent only knows their own cost function. The agent divides the chores into $n$ bundles, aiming to make the most costly bundle (to themselves) as desirable as possible. However, when agents have unequal entitlements, this interpretation fails, as the divide-and-choose approach must now account for differing entitlements. To address this, the following definition is proposed by Aziz, Chen, and Li~\cite{aziz2019weighted}:
\begin{equation}\label{eq:2}
	\WMMS_i = \min_{\langle A_1, A_2, \ldots, A_n\rangle \in \Pi(\chores)} \max_{\agent_j \in \agents} \cost_i(A_j)\frac{\entitlement_i}{\entitlement_j}.
\end{equation}
Notice that Equations \eqref{eq:1} and~\eqref{eq:2} are equivalent when $\entitlement_i=1/n$ for all agents.
We define an assignment as  \emph{$\alpha$-\WMMS} if all chores are assigned to agents and each agent $\agent_i$ receives a bundle $S_i$ of chores such that $\cost_i(S_i) \leq \alpha \cdot \WMMS_i$.

	\section{Layered Moving Knife Algorithm}\label{sec:constant}
We show in this section that there always exists an assignment of the chores to the agents in a way that each agent receives a bundle of chores that costs at most 20 times her maximin share value.

 We begin this section by stating a standard reduction to the \textit{sorted chores} setting. In this setting, we have additional constraints ensuring  $\cost_i(\{\chore_1\}) \leq \cost_i(\{\chore_2\}) \leq \ldots \leq \cost_i(\{\chore_m\})$ for each agent $\agent_i \in \agents$. More precisely, we show that a theoretical proof for existence of an $\alpha$-\WMMS\ assignment in the sorted chores setting implies a theoretical proof for the existence of an $\alpha$-\WMMS\ in the general setting. We then state our \textit{layered moving knife} algorithm that obtains an assignment which is constant-\WMMS. %For completeness, we include the proof of Lemma~\ref{lemma:1} in the appendix.

\begin{lemma}[ ~\cite{barman2017approximation}, restated for our setting]\label{lemma:1}
	If an $\alpha$-\WMMS\ assignment is guaranteed to exist in the sorted chores setting, then an $\alpha$-\WMMS\ assignment exists for the asymmetric chore division problem.
\end{lemma}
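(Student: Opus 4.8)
The plan is to establish a cost-preserving (in fact cost non-increasing) reduction from an arbitrary instance to a sorted instance, in such a way that any $\alpha$-\WMMS\ assignment for the sorted instance pulls back to an $\alpha$-\WMMS\ assignment for the original. First I would fix an arbitrary instance with agents $\agents$, chores $\chores$, cost functions $\cost_i$, and entitlements $\entitlement_i$. For each agent $\agent_i$ I would let $\sigma_i$ be the permutation of the chores that sorts them in nondecreasing $\cost_i$-order, so that $\cost_i(\{\chore_{\sigma_i(1)}\}) \leq \cdots \leq \cost_i(\{\chore_{\sigma_i(m)}\})$. The key idea of the standard reduction (following~\cite{barman2017approximation}) is to define a new \emph{sorted} instance on the same chore set, where agent $\agent_i$'s sorted cost function $\tr{\cost}_i$ assigns to the $k$-th chore the value of the $k$-th smallest of agent $i$'s original singleton costs; concretely $\tr{\cost}_i(\{\chore_k\})$ is the $k$-th order statistic of $(\cost_i(\{\chore_1\}),\ldots,\cost_i(\{\chore_m\}))$, extended additively. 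Entitlements are unchanged. By construction each $\tr{\cost}_i$ is additive and satisfies the sorted constraint $\tr{\cost}_i(\{\chore_1\}) \leq \cdots \leq \tr{\cost}_i(\{\chore_m\})$, so this is a legitimate sorted-chores instance.

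The two facts I would need to connect the instances are: (1) the \WMMS\ values are preserved, $\WMMS_i = \widehat{\WMMS}_i$ for every agent, and (2) there is an efficient way to convert a sorted assignment back to an original assignment without increasing any agent's cost. For (1), I would observe that $\WMMS_i$ depends only on agent $i$'s own cost function (the partition is over bundles but the max is taken under the single cost $\cost_i$ rescaled by entitlement ratios), and that the minimax-over-partitions quantity is invariant under relabeling the chores by a permutation; since $\tr{\cost}_i$ is exactly the $\cost_i$-multiset rearranged into sorted order, the optimal partition value is unchanged. For (2), I would use the classical exchange/bijection argument: given an $\alpha$-\WMMS\ assignment $\langle \tr S_1, \ldots, \tr S_n\rangle$ in the sorted instance, process the chores from cheapest to most expensive (in sorted indices) and greedily reassign original chores to agents so that each agent $\agent_i$ ends with a bundle $S_i$ whose original cost $\cost_i(S_i)$ is at most its sorted cost $\tr{\cost}_i(\tr S_i)$. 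The standard way to guarantee this is a round-by-round matching argument showing that the ``real'' cheapest available chores can always be routed to the agents who were tentatively given the sorted cheap chores.

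The main obstacle, and the heart of the proof, is step (2): ensuring the reassignment is simultaneously valid for \emph{all} agents, i.e.\ each chore is assigned to exactly one agent and the per-agent cost inequality $\cost_i(S_i) \leq \tr{\cost}_i(\tr S_i)$ holds for every $i$ at once. The difficulty is that different agents sort the chores differently, so a chore that is cheap for one agent may be expensive for another, and a naive per-agent greedy assignment could conflict on chore ownership. The clean resolution is to iterate over the $m$ sorted positions: at position $k$, each agent who is scheduled (in the sorted solution) to receive a chore at that position demands one not-yet-assigned chore, and one shows via Hall's condition / a bipartite matching that the demands at each prefix of positions can always be met from the real chores, with each agent receiving a real chore that costs it no more than the sorted chore it was nominally given. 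Because cost functions are additive, summing these per-chore inequalities yields $\cost_i(S_i) \leq \tr{\cost}_i(\tr S_i) \leq \alpha \cdot \widehat{\WMMS}_i = \alpha \cdot \WMMS_i$, which is exactly the desired $\alpha$-\WMMS\ guarantee for the original instance. I would close by noting that since every chore is assigned in this process, the resulting assignment is complete, completing the reduction.
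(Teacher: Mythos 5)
Your proposal is correct and follows essentially the same route as the paper: the same order-statistics construction of the sorted instance, the same observation that $\WMMS_i$ is unchanged (it depends only on the multiset of agent $i$'s singleton costs), and the same position-by-position back-conversion of the sorted assignment. The only difference is that you treat the feasibility of the reassignment as the hard step and invoke Hall's condition / bipartite matching, whereas the paper resolves it with a one-line pigeonhole --- each sorted position $\chore_j$ is held by exactly one agent, so when that agent picks at step $j$, fewer than $j$ chores have been assigned and hence one of her $j$ cheapest real chores is still free, costing her at most $\cost'_i(\{\chore_j\})$; your Hall's-condition route also works (position $k$'s neighborhood already has at least $k$ chores), but the heavier machinery is unnecessary.
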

\begin{proof}
	Let $I$ be an instance of the chore division problem. We construct an instance $I'$ of the chore division problem which is exactly the same as $I$ except that the cost functions are sorted. Let us denote the cost functions of $I'$ by $\cost'_1, \cost'_2, \ldots, \cost'_n$. We construct each $\cost'_i$ in the following way: We define $\cost'_i(\{\chore_j\})$ to be the $j$-th smallest value in the set $\{\cost_i(\{\chore_1\}), \ldots, \cost_i(\{\chore_m\})\}$.
%	\begin{itemize}
%		\item $\cost'_i(\{\chore_1\})$ is the smallest element of $\cost_i(\{\chore_1\}),\cost_i(\{\chore_2\}),\ldots,\cost_i(\{\chore_m\})$.
%		\item $\cost'_i(\{\chore_2\})$ is the second smallest element of $\cost_i(\{\chore_1\}),\cost_i(\{\chore_2\}),\ldots,\cost_i(\{\chore_m\})$.
%		\item $\cost'_i(\{\chore_3\})$ is the third smallest element of $\cost_i(\{\chore_1\}),\cost_i(\{\chore_2\}),\ldots,\cost_i(\{\chore_m\})$.
%		\item $\vdots$
%		\item $\cost'_i(\{\chore_m\})$ is the largest element of $\cost_i(\{\chore_1\}),\cost_i(\{\chore_2\}),\ldots,\cost_i(\{\chore_m\})$.
%	\end{itemize}
	It follows from our construction that $I'$ meets the criteria of the sorted chores setting and thus there is always an assignment of chores to the agents that is $\alpha$-\WMMS. Fix one such assignment as the solution of $I'$. Now, we construct an $\alpha$-\WMMS\ assignment for instance $I$ in the following way: We iterate over the chores in $I'$ from $\chore_1$ to $\chore_m$ and decide which agent each chore is assigned to in $I$ one by one. For each chore $\chore_j$, we find out which agent receives that chore in the solution of $I'$. Let this be $\agent_i$ for chore $\chore_j$. In instance $I$, we assign one of the $j$ chores to agent $\agent_i$ that have the smallest cost to her. Since we have assigned fewer than $j$ chores up to this point, at least one of such chores is unassigned. Moreover, the cost of agent $\agent_i$ for that chore is bounded by $\cost'_i(\{\chore_j\})$.
	
	This way, we construct an assignment for instance $I$ for which the total cost of each agent for all chores assigned to her is bounded by the same quantity in instance $I'$. Moreover, $\WMMS_i$ is the same for all agents in both instances. Thus our assignment is also $\alpha$-\WMMS\ for instance $I$.
\end{proof}

We next move on to simplify the entitlements at the expense of losing a multiplicative factor of 2 in our approximation. We call an instance \textit{entitlement-divisible}, if $\max\{\frac{\entitlement_i}{\entitlement_j},\frac{\entitlement_j}{\entitlement_i}\}$ is an integer for each pair of agents $\agent_i, \agent_j \in \agents$. We show in Lemma~\ref{lemma:2} that the existence of an $\alpha$-\WMMS\ assignment for the entitlement-divisible setting implies the existence of a $2\alpha$-\WMMS\ assignment for the chore division problem. %In the interest of space, the proof of Lemma \ref{lemma:2} is deferred to the appendix.

\begin{lemma}\label{lemma:2}
	If an $\alpha$-\WMMS\ assignment is guaranteed to exist in the entitlement-divisible setting, then a $2\alpha$-\WMMS\ assignment exists for the chore division problem.
\end{lemma}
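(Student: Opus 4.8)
The plan is to reduce an arbitrary instance to an entitlement-divisible one by rounding each entitlement to a power of two, and to show this loses only a factor of $2$ in the $\WMMS$ values. Since the cost functions are left untouched, the same assignment can then be transferred back.

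First I would take an arbitrary instance $I$ with entitlements $\entitlement_1,\ldots,\entitlement_n$ and construct an instance $I'$ with the same agents, chores, and cost functions $\cost_i$, but with entitlements $\entitlement_i' = 2^{\lfloor \log_2 \entitlement_i\rfloor}$ (renormalizing them to sum to $1$, which is harmless because $\WMMS$ depends only on the ratios $\entitlement_i/\entitlement_j$ and is therefore scale-invariant). For any two agents the ratio $\entitlement_i'/\entitlement_j'$ is a power of two, hence an integer when the larger is divided by the smaller, so $I'$ is entitlement-divisible. The hypothesis then yields an $\alpha$-$\WMMS$ assignment for $I'$ with bundles $S_i$ satisfying $\cost_i(S_i)\le \alpha\cdot \WMMS_i'$, where $\WMMS_i'$ denotes the weighted maximin share of $\agent_i$ in $I'$.

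The key step is to bound $\WMMS_i'$ in terms of $\WMMS_i$. Rounding down to a power of two gives $\entitlement_i'\le \entitlement_i < 2\entitlement_i'$, so for every ordered pair of agents the ratio telescopes as $\frac{\entitlement_i'}{\entitlement_j'} = \frac{\entitlement_i'}{\entitlement_i}\cdot\frac{\entitlement_i}{\entitlement_j}\cdot\frac{\entitlement_j}{\entitlement_j'} < 1\cdot\frac{\entitlement_i}{\entitlement_j}\cdot 2 = 2\,\frac{\entitlement_i}{\entitlement_j}$. Substituting this termwise inside the maximin expression, for any fixed partition $\langle A_1,\ldots,A_n\rangle$ we obtain $\max_j \cost_i(A_j)\frac{\entitlement_i'}{\entitlement_j'} < 2\max_j \cost_i(A_j)\frac{\entitlement_i}{\entitlement_j}$. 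Letting $A^\ast$ be a partition attaining the minimum in the definition of $\WMMS_i$ for the original instance, and using that $\WMMS_i'$ is a minimum over all partitions (hence at most its value at $A^\ast$), this gives $\WMMS_i' \le \max_j \cost_i(A^\ast_j)\frac{\entitlement_i'}{\entitlement_j'} < 2\,\WMMS_i$.

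Combining the two inequalities yields $\cost_i(S_i)\le \alpha\cdot\WMMS_i' < 2\alpha\cdot\WMMS_i$ for every agent $\agent_i$, so the assignment $\langle S_i\rangle$ is $2\alpha$-$\WMMS$ for $I$, as required. The only delicate point is the choice of rounding target: I expect the main (though mild) obstacle to be arguing that rounding down to a power of two does double duty, simultaneously guaranteeing that $I'$ is entitlement-divisible and that each $\WMMS$ value is distorted by at most a factor of $2$. The ratio telescoping in the middle step is exactly what makes both properties hold at once, and it is essential that only the entitlements — and not the cost functions — are modified, so that the bundles produced for $I'$ remain meaningful for $I$.
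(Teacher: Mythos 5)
Your proposal is correct and follows essentially the same route as the paper: round each entitlement down to a power of two (the paper's $f(x)=2^{-a}$ is exactly your $2^{\lfloor\log_2 x\rfloor}$ for $0<x\le 1$), renormalize, observe that ratios of powers of two make the instance entitlement-divisible, bound each ratio $\entitlement'_i/\entitlement'_j \le 2\,\entitlement_i/\entitlement_j$ to get $\WMMS'_i \le 2\,\WMMS_i$, and transfer the assignment back since the cost functions are unchanged. Your telescoping of the ratio and your explicit remark that $\WMMS$ is scale-invariant (so the renormalization is harmless) are just slightly more spelled-out versions of the paper's chain $\frac{f(\entitlement_i)}{f(\entitlement_j)} \le \frac{\entitlement_i}{f(\entitlement_j)} \le 2\frac{\entitlement_i}{\entitlement_j}$.
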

\begin{proof}
	Let $I$ be an instance of the chore division problem. We construct an instance $I'$ which is exactly the same as $I$ except that the entitlements are different. Define $f(x) = 2^{-a}$ where $a$ is the smallest integer such that $2^{-a} \leq x$, and $$\entitlement'_i = \frac{f(\entitlement_i)}{\sum_{\agent_j \in \agents}f(\entitlement_j)}$$ to be the entitlements of instance $I'$. Clearly, we have $0 < \entitlement'_i \leq 1$ for each agent $\agent_i \in \agents$ and also $\sum_{\agent_i \in \agents} \entitlement'_i = 1$ holds. Also, $\max\{\tfrac{\entitlement'_i}{\entitlement'_j},\tfrac{\entitlement'_j}{\entitlement'_i}\}$ is not only an integer but also a power of 2 for each pair of agents $\agent_i, \agent_j \in \agents$. Since the cost functions are the same, any solution for instance $I'$ would incur the same cost to each agent in instance $I$. Thus, in order to prove that an $\alpha$-\WMMS\ assignment of $I'$ is a $2\alpha$-\WMMS\ assignment of $I$, we only need to show that $\WMMS'_i$ (which is the maximin value of agent $\agent_i$ in instance $I'$) is at most $2\WMMS_i$ for each agent $\agent_i \in \agents$.
	
	To prove this, notice that $x/2 \leq f(x) \leq x$ holds for each $0 < x \leq 1$. Therefore, for each agent $\agent_i \in \agents$ we have 
	$$\frac{\entitlement'_i}{\entitlement'_j} =  \frac{\frac{f(\entitlement_i)}{\sum_{\agent_k \in \agents} f(\entitlement_k)}}{\frac{f(\entitlement_j)}{\sum_{\agent_k \in \agents} f(\entitlement_k)}} =  \frac{f(\entitlement_i)}{f(\entitlement_j)} \leq  \frac{\entitlement_i}{f(\entitlement_j)} \leq 2\frac{\entitlement_i}{\entitlement_j}$$ 
	and thus $\WMMS'_i \leq 2 \WMMS_i$.
\end{proof}

Based on Lemmas~\ref{lemma:1} and~\ref{lemma:2}, we assume without loss of generality that we meet the conditions of the sorted chores and entitlement-divisible settings. Also, we assume from here on that $\WMMS_i = \entitlement_i$ for all $\agent_i \in \agents$. Since scaling the cost functions does not affect the problem, this comes without loss of generality. We also remind the reader that $\entitlement_1 \leq \entitlement_2 \leq \ldots \leq \entitlement_n$ holds without loss of generality.

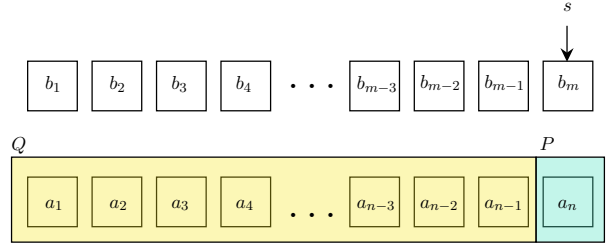
\begin{figure}[t]
	\begin{center}
		\tikzset{every picture/.style={line width=0.75pt}} %set default line width to 0.75pt        
		\scalebox{0.65}{
			\begin{tikzpicture}[x=0.75pt,y=0.75pt,yscale=-1,xscale=1]
				%uncomment if require: \path (0,457); %set diagram left start at 0, and has height of 457
				
				%Shape: Square [id:dp6371590404770967] 
				\draw   (118,109.5) -- (156.5,109.5) -- (156.5,148) -- (118,148) -- cycle ;
				%Shape: Square [id:dp48941679517812686] 
				\draw   (168,109.5) -- (206.5,109.5) -- (206.5,148) -- (168,148) -- cycle ;
				%Shape: Square [id:dp38064628269146805] 
				\draw   (218,109.5) -- (256.5,109.5) -- (256.5,148) -- (218,148) -- cycle ;
				%Shape: Square [id:dp3250622374787865] 
				\draw   (268,109.5) -- (306.5,109.5) -- (306.5,148) -- (268,148) -- cycle ;
				%Shape: Square [id:dp3465606438572507] 
				\draw   (368,109.5) -- (406.5,109.5) -- (406.5,148) -- (368,148) -- cycle ;
				%Shape: Square [id:dp698158997951782] 
				\draw   (418,109.5) -- (456.5,109.5) -- (456.5,148) -- (418,148) -- cycle ;
				%Shape: Square [id:dp04899309879695868] 
				\draw   (468,109.5) -- (506.5,109.5) -- (506.5,148) -- (468,148) -- cycle ;
				%Shape: Square [id:dp2410364763428391] 
				\draw   (518,109.5) -- (556.5,109.5) -- (556.5,148) -- (518,148) -- cycle ;
				%Straight Lines [id:da15988827273914785] 
				\draw    (536.5,82) -- (536.5,107) ;
				\draw [shift={(536.5,110)}, rotate = 270] [fill={rgb, 255:red, 0; green, 0; blue, 0 }  ][line width=0.08]  [draw opacity=0] (10.72,-5.15) -- (0,0) -- (10.72,5.15) -- (7.12,0) -- cycle    ;
				%Shape: Square [id:dp8624211225375472] 
				\draw   (118,199.5) -- (156.5,199.5) -- (156.5,238) -- (118,238) -- cycle ;
				%Shape: Square [id:dp8964808061031907] 
				\draw   (168,199.5) -- (206.5,199.5) -- (206.5,238) -- (168,238) -- cycle ;
				%Shape: Square [id:dp39443664125011824] 
				\draw   (218,199.5) -- (256.5,199.5) -- (256.5,238) -- (218,238) -- cycle ;
				%Shape: Square [id:dp39354525749724945] 
				\draw   (268,199.5) -- (306.5,199.5) -- (306.5,238) -- (268,238) -- cycle ;
				%Shape: Square [id:dp6581773349187336] 
				\draw   (368,199.5) -- (406.5,199.5) -- (406.5,238) -- (368,238) -- cycle ;
				%Shape: Square [id:dp5674552160889335] 
				\draw   (418,199.5) -- (456.5,199.5) -- (456.5,238) -- (418,238) -- cycle ;
				%Shape: Square [id:dp03400387687468642] 
				\draw   (468,199.5) -- (506.5,199.5) -- (506.5,238) -- (468,238) -- cycle ;
				%Shape: Square [id:dp7988080184698456] 
				\draw   (518,199.5) -- (556.5,199.5) -- (556.5,238) -- (518,238) -- cycle ;
				%Shape: Rectangle [id:dp21971679673452105] 
				\draw  [fill={rgb, 255:red, 248; green, 231; blue, 28 }  ,fill opacity=0.32 ] (105.5,184) -- (512.5,184) -- (512.5,250) -- (105.5,250) -- cycle ;
				%Shape: Rectangle [id:dp44749302531698476] 
				\draw  [fill={rgb, 255:red, 80; green, 227; blue, 194 }  ,fill opacity=0.34 ] (512.5,184) -- (565.5,184) -- (565.5,250) -- (512.5,250) -- cycle ;
				
				% Text Node
				\draw (130,119.4) node [anchor=north west][inner sep=0.75pt]    {$b_{1}$};
				% Text Node
				\draw (180,119.4) node [anchor=north west][inner sep=0.75pt]    {$b_{2}$};
				% Text Node
				\draw (228,119.4) node [anchor=north west][inner sep=0.75pt]    {$b_{3}$};
				% Text Node
				\draw (278,119.4) node [anchor=north west][inner sep=0.75pt]    {$b_{4}$};
				% Text Node
				\draw (318,125.4) node [anchor=north west][inner sep=0.75pt]  [font=\Huge]  {$\dotsc $};
				% Text Node
				\draw (372,119.4) node [anchor=north west][inner sep=0.75pt]    {$b_{m-3}$};
				% Text Node
				\draw (421,118.4) node [anchor=north west][inner sep=0.75pt]    {$b_{m-2}$};
				% Text Node
				\draw (471,118.4) node [anchor=north west][inner sep=0.75pt]    {$b_{m-1}$};
				% Text Node
				\draw (528,118.4) node [anchor=north west][inner sep=0.75pt]    {$b_{m}$};
				% Text Node
				\draw (532,62.4) node [anchor=north west][inner sep=0.75pt]    {$s$};
				% Text Node
				\draw (130,215.4) node [anchor=north west][inner sep=0.75pt]    {$a_{1}$};
				% Text Node
				\draw (180,215.4) node [anchor=north west][inner sep=0.75pt]    {$a_{2}$};
				% Text Node
				\draw (228,215.4) node [anchor=north west][inner sep=0.75pt]    {$a_{3}$};
				% Text Node
				\draw (278,215.4) node [anchor=north west][inner sep=0.75pt]    {$a_{4}$};
				% Text Node
				\draw (318,225.4) node [anchor=north west][inner sep=0.75pt]  [font=\Huge]  {$\dotsc $};
				% Text Node
				\draw (372,215.4) node [anchor=north west][inner sep=0.75pt]    {$a_{n-3}$};
				% Text Node
				\draw (421,215.4) node [anchor=north west][inner sep=0.75pt]    {$a_{n-2}$};
				% Text Node
				\draw (471,215.4) node [anchor=north west][inner sep=0.75pt]    {$a_{n-1}$};
				% Text Node
				\draw (528,215.4) node [anchor=north west][inner sep=0.75pt]    {$a_{n}$};
				% Text Node
				\draw (104,167) node [anchor=north west][inner sep=0.75pt]   [align=left] {$Q$};
				% Text Node
				\draw (514,167) node [anchor=north west][inner sep=0.75pt]   [align=left] {$P$};
			\end{tikzpicture}
		}
	\end{center}
	\caption{The initial state of the layered moving knife algorithm is illustrated in this figure.}\label{figure:initial}
\end{figure}

We are now ready to present our layered moving knife algorithm.  In the layered moving knife algorithm, we have three sets of agents: (i) \textit{dead agents} denoted by set $D$, (ii) \textit{agents in progress} denoted by set $P$, (iii) \textit{agents in queue} denoted by set $Q$. Dead agents are always a postfix of agents and agents in queue are a prefix of agents. Agents in between them are in progress. We assign the chores to agents one by one from $\chore_m$ to $\chore_1$. Thus, we introduce a variable $s$ which is pointing at the next chore to be assigned, initially equal to $m$. In the beginning, all agents are in queue except agent $\agent_n$ who is in progress and none of the agents is dead.  In other words, we have $D = \emptyset$, $P = \{\agent_n\}$, and  $Q = \{\agent_1, \agent_2, \ldots, \agent_{n-1}\}$. Our algorithm consists of multiple rounds, executing one by one until all chores are assigned to the agents at which point our algorithm terminates. We maintain the property that in the beginning of a round, the following conditions hold. We call them \textit{the safety measures of the algorithm}:
\begin{itemize}
	\item $\sum_{\agent_i \in P} \entitlement_i \geq \sum_{\agent_i \in D} \entitlement_i$.
	\item Let $\minp$ be an agent of $P$ who has the lowest index. $m-s \geq \frac{\sum_{i > \minp} \entitlement_i}{\entitlement_{\minp}}$.
\end{itemize}
Before we move any further, we would like to clarify what each of the conditions means in our algorithm. The first condition simply means that the total weight of the agents in progress is at least as much as the total weight of the dead agents. The second condition implies a bound on the cost of each of the remaining chores to the agents. 
We show in the following that the second condition implies that for each agent $\agent_i \in \agents$, $\cost_i(\{\chore_s\}) \leq \entitlement_{\minp}$ holds.
\begin{lemma}\label{lemma:3}
	If $m-s \geq \frac{\sum_{j > \minp} \entitlement_j}{\entitlement_{\minp}}$ then $\cost_i(\{\chore_s\}) \leq \entitlement_{\minp}$ holds for every agent $\agent_i \in \agents$.
\end{lemma}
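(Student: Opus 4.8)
The plan is to unpack the normalization $\WMMS_i = \entitlement_i$ into a concrete combinatorial statement and then run a capacity count against the sorted structure of the chores. First I would observe that, by definition,
\[
\WMMS_i = \min_{\langle A_1, \ldots, A_n\rangle \in \Pi(\chores)} \max_{\agent_j} \cost_i(A_j)\frac{\entitlement_i}{\entitlement_j} = \entitlement_i
\]
is equivalent, after dividing through by $\entitlement_i$, to $\min_{\langle A_1,\ldots,A_n\rangle}\max_j \cost_i(A_j)/\entitlement_j = 1$. Since the minimum is attained over the finite set of partitions, this yields a witness partition $\langle A_1, \ldots, A_n\rangle$ with $\cost_i(A_j) \leq \entitlement_j$ for \emph{every} $j$ simultaneously. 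This witness is the only structural fact about $\agent_i$ I will need.

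Next I would argue by contradiction: suppose $\cost_i(\{\chore_s\}) > \entitlement_{\minp}$. Because we are in the sorted-chores setting, $\cost_i(\{\chore_s\}) \leq \cost_i(\{\chore_{s+1}\}) \leq \cdots \leq \cost_i(\{\chore_m\})$, so each of the $m-s+1$ chores $\chore_s, \ldots, \chore_m$ individually costs more than $\entitlement_{\minp}$ to $\agent_i$. The first consequence is that none of these expensive chores can lie in a bundle $A_j$ with $j \leq \minp$: such a bundle has $\cost_i(A_j) \leq \entitlement_j \leq \entitlement_{\minp}$ (using that the entitlements are sorted), which is strictly below the cost of any single expensive chore. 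Hence all $m-s+1$ expensive chores are packed into the bundles $A_{\minp+1}, \ldots, A_n$.

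I would then finish with a pigeonhole count. A bundle $A_j$ with $j > \minp$ has $\cost_i(A_j) \leq \entitlement_j$ while each expensive chore in it costs more than $\entitlement_{\minp}$, so by additivity it holds at most $\entitlement_j/\cost_i(\{\chore_s\})$ of them; summing over $j > \minp$ and using $\cost_i(\{\chore_s\}) > \entitlement_{\minp}$ gives
\[
m - s + 1 \;\leq\; \sum_{j > \minp}\frac{\entitlement_j}{\cost_i(\{\chore_s\})} \;<\; \sum_{j>\minp}\frac{\entitlement_j}{\entitlement_{\minp}} \;=\; \frac{\sum_{j>\minp}\entitlement_j}{\entitlement_{\minp}}.
\]
Combining with the hypothesis $m-s \geq \big(\sum_{j>\minp}\entitlement_j\big)/\entitlement_{\minp}$ forces $m-s+1 < m-s$, a contradiction, so $\cost_i(\{\chore_s\}) \leq \entitlement_{\minp}$.

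The argument is short, and the step I would treat as the main obstacle is the strict-versus-weak inequality bookkeeping that makes the trailing ``$+1$'' collide with the hypothesis: the contradiction hinges on the capacity bound being \emph{strict} (coming from $\cost_i(\{\chore_s\}) > \entitlement_{\minp}$) while the hypothesis is weak, so exactly one expensive chore is forced beyond what the high-index bundles can absorb. I would also double-check that the witness partition delivers $\cost_i(A_j) \leq \entitlement_j$ for all $j$ at once rather than on average, which is precisely where the $\max$ in the $\WMMS$ definition does the work. Notably, this proof uses only the sorted-chores normalization together with $\WMMS_i = \entitlement_i$, and does not invoke entitlement-divisibility.
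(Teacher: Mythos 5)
Your proof is correct and takes essentially the same route as the paper's: both extract a witness partition with $\cost_i(A_j) \leq \entitlement_j$ for all $j$ from the normalization $\WMMS_i = \entitlement_i$, use the sorted orderings to force the $m-s+1$ expensive chores into $A_{\minp+1},\ldots,A_n$, and then derive a capacity contradiction from the hypothesis. The only cosmetic difference is that you count expensive chores per bundle, while the paper lower-bounds the aggregate cost via $\cost_i(A_{\minp+1}\cup\cdots\cup A_n) \geq (m-s+1)\entitlement_{\minp} > \sum_{j>\minp}\entitlement_j$; the two counts are interchangeable.
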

\begin{proof}
	Fix an agent $\agent_i \in \agents$ and consider assignment $\langle A_1, A_2, \ldots, A_n\rangle$ in Equation~\eqref{eq:2} that minimizes the maximin value for agent $\agent_i$. It follows from our previous assumption that $\WMMS_i = \entitlement_i = \max_{\agent_j \in \agents} \cost_i(A_j)\frac{\entitlement_i}{\entitlement_j}$. In other words, we have $\cost_i(A_j)\frac{\entitlement_i}{\entitlement_j} \leq \entitlement_i$ and thus $\cost_i(A_j)\leq \entitlement_j$ holds for all $\agent_j \in \agents$. Now, assume for the sake of contradiction that  $\cost_i(\{\chore_s\}) > \entitlement_{\minp}$ which means $\cost_i(\{\chore_j\}) > \entitlement_{\minp}$  for any $j \geq s$. $\cost_i(A_j) \leq \entitlement_j$ for all $\agent_j \in \agents$ implies that none of chores $\chore_{s}, \chore_{s+1}, \ldots, \chore_m$ is present in any of $A_1, A_2, \ldots, A_{\minp}$ and thus they are all in $A_{\minp+1}, A_{\minp+2}, \ldots, A_{n}$. Therefore, the cost $\cost_i(A_{\minp+1} \cup A_{\minp+2} \cup A_n)$ should be at least $(m-s+1) \entitlement_{\minp} > \sum_{j > \minp} \entitlement_j$. This implies that for at least one $j > \minp$, $\cost_i(A_j) > \entitlement_{j}$, contradicting our assumption.
\end{proof}

Also, recall that we restrict ourselves to the sorted chores setting and as a result, Lemma~\ref{lemma:3} implies $\cost_i(\{\chore_j\}) \leq \entitlement_{\minp}$ for any agent $\agent_i \in \agents$ and any $1 \leq j \leq s$.

By the way we create $D$, $P$, and $Q$ initially, the safety measures of the algorithm are trivially satisfied in the beginning of the first round. We explain in the following the procedure we run in every round of the layered moving knife algorithm and prove that after each round, either all chores are assigned or the safety measures continue to hold. 

Therefore, assume that we are in the beginning of some round and the safety measures hold. Also, recall that $\minp$ is the smallest index of an agent in $P$. Define $\minp'$ to be the smallest index $i$ such that $\sum_{i \leq j < \minp} \entitlement_j \geq \sum_{\minp < j \leq n} \entitlement_j$. If no such $i$ exists, we set $\minp'=1$ and this will be the final round of the algorithm. Our goal is to assign some chores to the agents of $P$ in this round and at the end of the round, move all agents of $P$ into $D$ and move  agents $\agent_{\minp'}, \agent_{\minp'+1},\ldots,\agent_{\minp-1}$ from $Q$ to $P$. We need to ensure the safety measures of the algorithm continue to hold for the next round unless all chores are assigned. In that case, the first property directly follows from the definition of $\minp'$. In what follows, we explain the procedure we run and show that if the first property of the safety measures does not hold then all chores will be assigned and otherwise, the second property of the safety measures continues to hold.

\begin{figure}[t]

\begin{center}

\tikzset{every picture/.style={line width=0.75pt}} %set default line width to 0.75pt        
\scalebox{0.5}{
\begin{tikzpicture}[x=0.75pt,y=0.75pt,yscale=-1,xscale=1]
	%uncomment if require: \path (0,457); %set diagram left start at 0, and has height of 457
	
	%Shape: Square [id:dp507548842549055] 
	\draw   (38,109.5) -- (76.5,109.5) -- (76.5,148) -- (38,148) -- cycle ;
	%Shape: Square [id:dp05846148998522738] 
	\draw   (88,109.5) -- (126.5,109.5) -- (126.5,148) -- (88,148) -- cycle ;
	%Shape: Square [id:dp871820129552707] 
	\draw   (208,109.5) -- (246.5,109.5) -- (246.5,148) -- (208,148) -- cycle ;
	%Shape: Square [id:dp0654692356951112] 
	\draw   (258,109.5) -- (296.5,109.5) -- (296.5,148) -- (258,148) -- cycle ;
	%Shape: Square [id:dp9085128491154152] 
	\draw   (418,109.5) -- (456.5,109.5) -- (456.5,148) -- (418,148) -- cycle ;
	%Shape: Square [id:dp3214692669524235] 
	\draw   (468,109.5) -- (506.5,109.5) -- (506.5,148) -- (468,148) -- cycle ;
	%Shape: Square [id:dp2587882247174744] 
	\draw   (518,109.5) -- (556.5,109.5) -- (556.5,148) -- (518,148) -- cycle ;
	%Shape: Square [id:dp16366842089720268] 
	\draw   (568,109.5) -- (606.5,109.5) -- (606.5,148) -- (568,148) -- cycle ;
	%Straight Lines [id:da648353672650799] 
	\draw    (326.5,82) -- (326.5,107) ;
	\draw [shift={(326.5,110)}, rotate = 270] [fill={rgb, 255:red, 0; green, 0; blue, 0 }  ][line width=0.08]  [draw opacity=0] (10.72,-5.15) -- (0,0) -- (10.72,5.15) -- (7.12,0) -- cycle    ;
	%Shape: Square [id:dp9957691966379881] 
	\draw   (28,249.5) -- (66.5,249.5) -- (66.5,288) -- (28,288) -- cycle ;
	%Shape: Square [id:dp23390261872756957] 
	\draw   (78,249.5) -- (116.5,249.5) -- (116.5,288) -- (78,288) -- cycle ;
	%Shape: Square [id:dp2981808825248303] 
	\draw   (305,249.5) -- (343.5,249.5) -- (343.5,288) -- (305,288) -- cycle ;
	%Shape: Square [id:dp6934363572830732] 
	\draw   (355,249.5) -- (393.5,249.5) -- (393.5,288) -- (355,288) -- cycle ;
	%Shape: Square [id:dp1417655766024375] 
	\draw   (448,249.5) -- (486.5,249.5) -- (486.5,288) -- (448,288) -- cycle ;
	%Shape: Square [id:dp0572078951715006] 
	\draw   (498,249.5) -- (536.5,249.5) -- (536.5,288) -- (498,288) -- cycle ;
	%Shape: Square [id:dp2166835207446367] 
	\draw   (548,249.5) -- (586.5,249.5) -- (586.5,288) -- (548,288) -- cycle ;
	%Shape: Square [id:dp14641643904576418] 
	\draw   (598,249.5) -- (636.5,249.5) -- (636.5,288) -- (598,288) -- cycle ;
	%Shape: Rectangle [id:dp04932965274890366] 
	\draw  [fill={rgb, 255:red, 248; green, 231; blue, 28 }  ,fill opacity=0.32 ] (18.5,234) -- (211.5,234) -- (211.5,300) -- (18.5,300) -- cycle ;
	%Shape: Rectangle [id:dp6229668514015964] 
	\draw  [fill={rgb, 255:red, 80; green, 227; blue, 194 }  ,fill opacity=0.34 ] (211.5,234) -- (349.5,234) -- (349.5,300) -- (211.5,300) -- cycle ;
	%Shape: Square [id:dp3320863130945779] 
	\draw   (308,109.5) -- (346.5,109.5) -- (346.5,148) -- (308,148) -- cycle ;
	%Straight Lines [id:da9799426209502342] 
	\draw    (230.5,82) -- (230.5,107) ;
	\draw [shift={(230.5,110)}, rotate = 270] [fill={rgb, 255:red, 0; green, 0; blue, 0 }  ][line width=0.08]  [draw opacity=0] (10.72,-5.15) -- (0,0) -- (10.72,5.15) -- (7.12,0) -- cycle    ;
	%Shape: Rectangle [id:dp9016297874811636] 
	\draw  [fill={rgb, 255:red, 184; green, 233; blue, 134 }  ,fill opacity=0.31 ] (199.5,94) -- (356.5,94) -- (356.5,182) -- (199.5,182) -- cycle ;
	%Shape: Square [id:dp5398106873123537] 
	\draw   (215,249.5) -- (253.5,249.5) -- (253.5,288) -- (215,288) -- cycle ;
	%Shape: Square [id:dp5829008753463791] 
	\draw   (169,249.5) -- (207.5,249.5) -- (207.5,288) -- (169,288) -- cycle ;
	%Shape: Rectangle [id:dp27233567297057104] 
	\draw  [fill={rgb, 255:red, 155; green, 155; blue, 155 }  ,fill opacity=0.27 ] (350.5,234) -- (644.5,234) -- (644.5,300) -- (350.5,300) -- cycle ;
	
	\draw (50,119.4) node [anchor=north west][inner sep=0.75pt]    {$b_{1}$};
	% Text Node
	\draw (100,119.4) node [anchor=north west][inner sep=0.75pt]    {$b_{2}$};
	% Text Node
	\draw (265,119.4) node [anchor=north west][inner sep=0.75pt]    {$b_{s-1}$};
	% Text Node
	\draw (139,124.4) node [anchor=north west][inner sep=0.75pt]  [font=\Huge]  {$\dotsc $};
	% Text Node
	\draw (422,119.4) node [anchor=north west][inner sep=0.75pt]    {$b_{m-3}$};
	% Text Node
	\draw (471,118.4) node [anchor=north west][inner sep=0.75pt]    {$b_{m-2}$};
	% Text Node
	\draw (521,118.4) node [anchor=north west][inner sep=0.75pt]    {$b_{m-1}$};
	% Text Node
	\draw (578,118.4) node [anchor=north west][inner sep=0.75pt]    {$b_{m}$};
	% Text Node
	\draw (301,62.4) node [anchor=north west][inner sep=0.75pt]    {$s=\ktwo$};
	% Text Node
	\draw (40,265.4) node [anchor=north west][inner sep=0.75pt]    {$a_{1}$};
	% Text Node
	\draw (90,265.4) node [anchor=north west][inner sep=0.75pt]    {$a_{2}$};
	% Text Node
	\draw (311,265.4) node [anchor=north west][inner sep=0.75pt]    {$a_{x-1}$};
	% Text Node
	\draw (365,265.4) node [anchor=north west][inner sep=0.75pt]    {$a_{x}$};
	% Text Node
	\draw (395,270.4) node [anchor=north west][inner sep=0.75pt]  [font=\Huge]  {$\dotsc $};
	% Text Node
	\draw (452,265.4) node [anchor=north west][inner sep=0.75pt]    {$a_{n-3}$};
	% Text Node
	\draw (501,265.4) node [anchor=north west][inner sep=0.75pt]    {$a_{n-2}$};
	% Text Node
	\draw (551,265.4) node [anchor=north west][inner sep=0.75pt]    {$a_{n-1}$};
	% Text Node
	\draw (608,265.4) node [anchor=north west][inner sep=0.75pt]    {$a_{n}$};
	% Text Node
	\draw (18,217) node [anchor=north west][inner sep=0.75pt]   [align=left] {$Q$};
	% Text Node
	\draw (213,213) node [anchor=north west][inner sep=0.75pt]   [align=left] {$P$};
	% Text Node
	\draw (369,124.4) node [anchor=north west][inner sep=0.75pt]  [font=\Huge]  {$\dotsc $};
	% Text Node
	\draw (321,119.4) node [anchor=north west][inner sep=0.75pt]    {$b_{s}$};
	% Text Node
	\draw (215,119.4) node [anchor=north west][inner sep=0.75pt]    {$b_{s-2}$};
	% Text Node
	\draw (219,62.4) node [anchor=north west][inner sep=0.75pt]    {$\kone$};
	% Text Node
	\draw (235,160) node [anchor=north west][inner sep=0.75pt]   [align=left] {knife interval};
	% Text Node
	\draw (257,270.4) node [anchor=north west][inner sep=0.75pt]  [font=\Huge]  {$\dotsc $};
	% Text Node
	\draw (216,265.4) node [anchor=north west][inner sep=0.75pt]    {$a_{\minp}$};
	% Text Node
	\draw (122,270.4) node [anchor=north west][inner sep=0.75pt]  [font=\Huge]  {$\dotsc $};
	% Text Node
	\draw (168,265.4) node [anchor=north west][inner sep=0.75pt]  [font=\footnotesize]  {$a_{\minp-1}$};
	% Text Node
	\draw (353,213) node [anchor=north west][inner sep=0.75pt]   [align=left] {$D$};

\end{tikzpicture}
}

\end{center}

	\caption{The layered moving knife algorithm is explained in this figure. $\agent_x$ is the lowest index agent who is in $D$.}\label{figure:initial}
\end{figure}
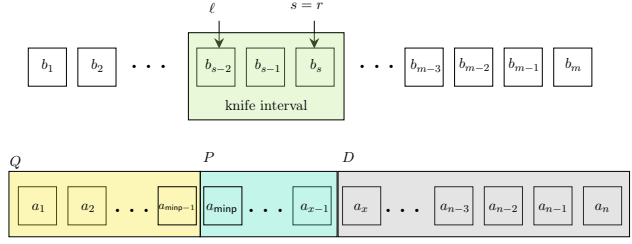

We only assign the chores to agents of $P$ in each round. For this purpose, we construct a set $P'$ that contains multiple copies of each agent of $P$: For each agent $\agent_i \in P$, we put $2 \entitlement_i / \entitlement_{\minp}$ copies of agent $\agent_i$ in $P'$. We then define a \textit{knife interval} as an interval $(\kone,\ktwo)$ of the chores initially equal to $(s,s)$. The first element of the pair denotes the beginning of the knife interval and the second element of the pair specifies the end of the knife interval. We keep decreasing the beginning of the interval as long as $\kone \geq 1$ holds and there is an agent in $P'$ whose total cost for chores $\{\chore_{\kone}, \chore_{\kone+1}, \ldots, \chore_{\ktwo-1},\chore_{\ktwo}\}$ is bounded by $5\entitlement_{\minp}$. Once we are no longer able to decrease the beginning of the interval, we give all chores $\chore_{\kone}, \chore_{\kone+1}, \ldots, \chore_{\ktwo-1},\chore_{\ktwo}$ to one agent of $P'$ whose cost for them is bounded by $5\entitlement_{\minp}$ and remove that agent from $P'$. We then set $s = \kone -1$ and then update both $\kone$ and $\ktwo$ to $s$ and continue the process. We stop when either $P'$ is empty or $s = 0$ which means we have assigned all chores. In the following we prove that this assignment provides the desired conditions. Let $c = \sum_{\agent_i \in P} 2\entitlement_i/\entitlement_{\minp}$ be the original size of $P'$.

\begin{lemma}\label{lemma:4}
	After removing the $c/2$'th agent from $P'$, either all chores are assigned to agents or the cost of each of the remaining chores for each of the agents remaining in $P'$ is bounded by $\entitlement_{\minp'}$.
\end{lemma}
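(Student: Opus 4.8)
The plan is to fix an arbitrary agent $\agent_i$ that is still present in $P'$ right after the $c/2$'th removal, let $s$ denote the value of the pointer at that moment, and show $\cost_i(\{\chore_s\})\le \entitlement_{\minp'}$. Since $\chore_s$ is the highest-indexed remaining chore, the sorted-chores assumption then bounds $\cost_i(\{\chore_j\})\le\entitlement_{\minp'}$ for every remaining chore $\chore_j$, and as $\agent_i$ is arbitrary this gives the lemma. (If all chores are already assigned, or no agent survives, we are in the trivial alternative.)

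The first step is a lower bound on the cost, \emph{to} $\agent_i$, of everything handed out in the first $c/2$ bags. Consider one such bag $\{\chore_{\kone},\ldots,\chore_{\ktwo}\}$. Because $\agent_i$ survives the $c/2$'th removal it was in $P'$ when this bag was closed, and (since we are not in the all-chores-assigned case) the knife stopped not at the boundary but because no agent of $P'$ could absorb one more chore; hence every agent of $P'$, in particular $\agent_i$, had cost more than $5\entitlement_{\minp}$ for the extended interval $\{\chore_{\kone-1},\ldots,\chore_{\ktwo}\}$. By Lemma~\ref{lemma:3} the extra chore satisfies $\cost_i(\{\chore_{\kone-1}\})\le\entitlement_{\minp}$, so $\agent_i$'s cost for the bag itself exceeds $4\entitlement_{\minp}$. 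Summing over all $c/2$ bags and using $c=\sum_{\agent_j\in P}2\entitlement_j/\entitlement_{\minp}$, the total cost to $\agent_i$ of the chores assigned so far is strictly more than $4\entitlement_{\minp}\cdot(c/2)=4\sum_{\agent_j\in P}\entitlement_j$.

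The second step converts this into a capacity violation, and the key design choice is to argue with \emph{costs} rather than with the \emph{number} of chores — a counting bound would introduce a spurious $1/\entitlement_{\minp'}$ factor and break down when $\entitlement_{\minp'}$ is small. Assume for contradiction $\cost_i(\{\chore_s\})>\entitlement_{\minp'}$, and fix the partition $\langle A_1,\ldots,A_n\rangle$ realizing $\WMMS_i=\entitlement_i$, which by the argument in Lemma~\ref{lemma:3} satisfies $\cost_i(A_j)\le\entitlement_j$ for all $j$. Every already-assigned chore has index larger than $s$, hence by sortedness costs at least $\cost_i(\{\chore_s\})>\entitlement_{\minp'}$; such a chore cannot lie in any bundle $A_j$ with $\entitlement_j\le\entitlement_{\minp'}$, i.e.\ with $j\le\minp'$. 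Thus all assigned chores lie inside $\bigcup_{j>\minp'}A_j$, whose total cost to $\agent_i$ is at most $\sum_{j>\minp'}\entitlement_j$. Together with the previous step this forces $\sum_{j>\minp'}\entitlement_j>4\sum_{\agent_j\in P}\entitlement_j$.

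The final step is to contradict this by bounding $\sum_{j>\minp'}\entitlement_j$ from above, and this is the delicate bookkeeping I expect to be the main obstacle. Writing $W_P=\sum_{\agent_j\in P}\entitlement_j$ and $W_{>\minp}=\sum_{j>\minp}\entitlement_j$, the first safety measure $\sum_{\agent_j\in D}\entitlement_j\le W_P$ gives $W_{>\minp}=(W_P-\entitlement_{\minp})+\sum_{\agent_j\in D}\entitlement_j\le 2W_P-\entitlement_{\minp}$, while the minimality in the definition of $\minp'$ yields $\sum_{\minp'<j<\minp}\entitlement_j<W_{>\minp}$. Splitting $\sum_{j>\minp'}\entitlement_j=\sum_{\minp'<j<\minp}\entitlement_j+\entitlement_{\minp}+W_{>\minp}$ and substituting both estimates collapses the right-hand side to at most $2W_{>\minp}+\entitlement_{\minp}\le 4W_P-\entitlement_{\minp}<4W_P$, contradicting $\sum_{j>\minp'}\entitlement_j>4W_P$. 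Hence $\cost_i(\{\chore_s\})\le\entitlement_{\minp'}$, as desired. What makes this last inequality subtle is that it requires \emph{both} safety measures together with the defining property of $\minp'$ simultaneously; moreover the constant $5$ in the knife test and the multiplicity factor $2$ used to build $P'$ are calibrated precisely so that the upper bound on $\sum_{j>\minp'}\entitlement_j$ lands strictly below the $4W_P$ lower bound rather than merely meeting it (one should also check the degenerate case $W_{>\minp}=0$, where the minimality step is vacuous, falls under the all-chores-assigned alternative).
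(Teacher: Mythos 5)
Your proof is correct and follows essentially the same route as the paper's: the lower bound of $4\entitlement_{\minp}$ per closed bag via maximality of the knife interval together with Lemma~\ref{lemma:3}, the comparison against the \WMMS\ partition satisfying $\cost_i(A_j)\le\entitlement_j$, and the weight bookkeeping combining the first safety measure with the defining property of $\minp'$. The only differences are presentational: you argue by contraposition (assuming a remaining chore costs more than $\entitlement_{\minp'}$) where the paper directly locates a removed chore in some $A_j$ with $j\le\minp'$, and you spell out in detail the inequality $\sum_{j>\minp'}\entitlement_j<4\sum_{\agent_j\in P}\entitlement_j$ that the paper compresses into the asserted chain $4\sum_{\agent_j\in P}\entitlement_j\ge 2\sum_{\agent_j\in P\cup D}\entitlement_j>\sum_{j>\minp'}\entitlement_j$.
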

\begin{proof}
	Let $\agent_i$ be an agent of $P'$ after removing $c/2$ agents from it. The moving knife procedure would assign a maximal number of chores to an agent of $P'$ in every turn. We argue that each time an agent of $P'$ was removed from it using the moving knife procedure, the cost of the chores given to that agent for agent $\agent_i$ is at least $4\entitlement_{\minp}$. This is because the cost of each chore involved in this process for agent $\agent_i$ is at most $\entitlement_{\minp}$ and if the cost of that bundle of chores is less than or equal to $4\entitlement_{\minp}$ for agent $\agent_i$, we could have moved the knife by one more chore. 
	
	Thus, the total cost of agent $\agent_i$ for the chores removed in this round would be at least 
	\begin{align*}
	(c/2) 4\entitlement_{\minp} = 4 \sum_{\agent_j \in P} \entitlement_j 
	& \geq 2 \sum_{\agent_j \in P \cup D} \entitlement_j 
	> \sum_{j > \minp'} \entitlement_j.
	\end{align*}
	Consider assignment $\langle A_1, A_2, \ldots, A_n\rangle$ in Equation~\eqref{eq:2} that minimizes the maximin value for agent $\agent_i$. It follows that $\WMMS_i = \entitlement_i = \max_{\agent_j \in \agents} \cost_i(A_j)\frac{\entitlement_i}{\entitlement_j}$. In other words, we have $\cost_i(A_j)\frac{\entitlement_i}{\entitlement_j} \leq \entitlement_i$ and thus $\cost_i(A_j)\leq \entitlement_j$ holds for all $\agent_j \in \agents$. Since agent $\agent_i$'s cost for the removed chores in this round is more than $\sum_{j > \minp'} \entitlement_j$ we imply that one of $A_1, A_2, \ldots, A_{\minp'}$ contains one of such chores and thus the cost of that chore for agent $\agent_i$ should be bounded by $\entitlement_{\minp'}$. This implies that the cost of all of the remaining chores for agent $\agent_i$ is bounded by $\entitlement_{\minp'}$.
\end{proof}
The proof of Lemma~\ref{lemma:4} also implies another fact: If $\sum_{\minp' \leq j <\minp} \entitlement_j$ is not at least as much as $\sum_{\minp \leq j} \entitlement_j$, then, all chores will be assigned by the $c/2$'th removal. %from $P'$.

Lemma~\ref{lemma:4} directly implies the conditions that we wish to prove. If at any point we assign all of the chores to agents, then the desired conditions are held. Otherwise, after removing $c/2$ elements from $P'$, the cost of the remaining chores for each agent of $P'$ is at most $\entitlement_{\minp'}$. Moreover, whenever an agent is removed from $P'$, her cost for the chores she is assigned to is at least $4\entitlement_{\minp}$. Thus, the number of chores that are removed for the next $c/2$ agents is at least 
\begin{align*}
c/2 (4\entitlement_{\minp} / \entitlement_{\minp'}) =  4 \sum_{\agent_i \in P} \entitlement_i/\entitlement_{\minp'}
&\geq  2\sum_{\agent_i \in P \cup D} \entitlement_i/\entitlement_{\minp'} \\
& \geq \sum_{i > \minp'} \entitlement_i /\entitlement_{\minp'}
\end{align*}
 which is desired.

\begin{algorithm}[t]\label{algorithm:layered}
	\caption{Layered moving knife algorithm}
	\begin{algorithmic}[1]
		\STATE \textbf{Input:} $n, m, \entitlement_1, \entitlement_2,\ldots, \entitlement_n, \cost_1, \cost_2,\ldots,\cost_n$
		\STATE \textbf{Output:} $A_1, A_2,\ldots,A_n$
		\STATE $s \leftarrow m$\;
		$D \leftarrow \emptyset$\;
		$P \leftarrow \{\agent_n\}$\;
		$Q \leftarrow \agents \setminus \{\agent_n\}$\;
		\FOR{$\agent_i \in \agents$}
			\STATE $A_i \leftarrow \emptyset$\;
		\ENDFOR
		\WHILE{$s > 0$}
			\STATE $\minp \leftarrow$ smallest index of an agent in $P$\;
			\STATE $P' \leftarrow \emptyset$\;
			\FOR{$\agent_i \in P$}
				\FOR{$j \leftarrow 1$ to $2\entitlement_i / \entitlement_{\minp}$}
					\STATE Add $\agent_i$ to $P'$\;
				\ENDFOR
			\ENDFOR
			\STATE $(\kone, \ktwo) \leftarrow (s,s)$\;
			\WHILE{$|P'| > 0$ and $s > 0$}
				\WHILE{$\kone > 1$ and there exists an agent $\agent_i \in P'$ such that $\cost_i(\{\chore_{\kone-1},\chore_{\kone},\chore_{\kone+1},\ldots,\chore_{\ktwo}\}) \leq 5\entitlement_{\minp}$}
					\STATE $\kone \leftarrow \kone-1$\;
				\ENDWHILE
				\STATE $\agent_i \leftarrow $ an agent of $P'$ such that $\cost_i(\{\chore_{\kone},\chore_{\kone+1},\ldots,\chore_{\ktwo}\}) \leq 5\entitlement_{\minp}$\;
				\STATE $A_i \leftarrow A_i \cup \{\chore_{\kone},\chore_{\kone+1},\ldots,\chore_{\ktwo}\}$\;
				\STATE Remove $\agent_i$ from $P'$\;
				\STATE $s \leftarrow \kone-1$
				\STATE $(\kone, \ktwo) \leftarrow (s,s)$\;
			\ENDWHILE
			\STATE $\minp' \leftarrow $ largest $i$ such that $\sum_{j \geq i} \entitlement_j \geq \sum_{\agent_j \in D \cup P} \entitlement_j$ or $1$ if no such $i$ exists\;
			\STATE $D \leftarrow D \cup P$\;
			\STATE $P \leftarrow \{\agent_{\minp'},\agent_{\minp'+1},\agent_{\minp'+2},\ldots,\agent_{\minp-1}\}$\;
			\STATE $Q \leftarrow \{\agent_{1},\agent_{2},\agent_{3},\ldots,\agent_{\minp'-1}\}$\;
		\ENDWHILE
		\RETURN $A_1, A_2, \ldots, A_n$
	\end{algorithmic}
\end{algorithm}
\vspace{0.5cm}

\begin{theorem}\label{theorem:main}
The asymmetric chore division problem admits a 20-\WMMS\ assignment. 
\end{theorem}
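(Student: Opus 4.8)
The plan is to combine the three reductions already set up with a short accounting of the cost the layered moving knife charges to each agent. By Lemmas~\ref{lemma:1} and~\ref{lemma:2} we may pass to the sorted-chores and entitlement-divisible settings (with the normalization $\WMMS_i=\entitlement_i$) at a total cost of a factor $2$, so it suffices to prove that Algorithm~\ref{algorithm:layered} produces a $10$-\WMMS\ assignment there, i.e.\ that it gives every agent $\agent_i$ a bundle of cost at most $10\entitlement_i$. The argument then splits into a correctness/termination half and a cost-accounting half.

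First I would argue that the algorithm is well defined and halts with every chore assigned, by induction over the rounds with the two safety measures as the invariant. The base case holds by the initialization $D=\emptyset$, $P=\{\agent_n\}$, $Q=\agents\setminus\{\agent_n\}$. For the inductive step, Lemma~\ref{lemma:3} guarantees that while the second safety measure holds every unassigned chore costs at most $\entitlement_{\minp}$ to each agent, so the single chore $\chore_s$ can always be handed to some copy in $P'$ and $s$ strictly decreases each round. The first safety measure is reinstated for the next round directly from the choice of $\minp'$, and the second one follows from the chore-removal estimate derived right after Lemma~\ref{lemma:4}, namely that the last $c/2$ removals delete at least $\sum_{i>\minp'}\entitlement_i/\entitlement_{\minp'}$ chores. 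The only situation in which the first safety measure cannot be restored is precisely the final round, where $\sum_{\minp'\le j<\minp}\entitlement_j<\sum_{\minp\le j}\entitlement_j$; here the corollary to Lemma~\ref{lemma:4} applies and forces all chores to be assigned by the $c/2$-th removal, so the algorithm stops with $s=0$ rather than opening an invalid round with an empty $P$.

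The cost bound is then a clean counting step. Each agent enters $P$ exactly once, is processed in a single round, and is afterwards moved to $D$; since chores are handed only to copies in $P'$, agent $\agent_i$ receives chores only during its own round. In that round $\minp$ is the smallest index in $P$, and as the entitlements are sorted we have $\entitlement_{\minp}\le\entitlement_i$; moreover $P'$ holds exactly $2\entitlement_i/\entitlement_{\minp}$ copies of $\agent_i$, an integer in the entitlement-divisible setting. Every bundle the knife assigns to a copy costs that agent at most $5\entitlement_{\minp}$, and $\agent_i$ receives at most one bundle per copy, so the total charged to $\agent_i$ is at most $(2\entitlement_i/\entitlement_{\minp})(5\entitlement_{\minp})=10\entitlement_i=10\,\WMMS_i$. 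This is a $10$-\WMMS\ assignment in the reduced setting, and pushing it back through Lemma~\ref{lemma:2} (factor $2$) and Lemma~\ref{lemma:1} (factor $1$) yields the claimed $20$-\WMMS\ assignment in general.

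I expect the main obstacle to be the termination/completeness half rather than the final cost bound. One must check that the invariants are genuinely preserved across the round boundary and, most delicately, that the final round — in which the first safety measure is deliberately dropped — still consumes every remaining chore. This is exactly what the corollary to Lemma~\ref{lemma:4} supplies, so the real work is confirming that the knife threshold $5\entitlement_{\minp}$ and the multiplicity $2\entitlement_i/\entitlement_{\minp}$ are chosen large enough for that corollary to fire; once completeness is secured, the $10$-\WMMS\ accounting (and hence the factor $20$) is routine.
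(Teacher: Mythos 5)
Your proposal is correct and follows essentially the same route as the paper: reduce via Lemmas~\ref{lemma:1} and~\ref{lemma:2} to the sorted, entitlement-divisible setting (factor $2$), maintain the two safety measures across rounds using Lemma~\ref{lemma:3}, Lemma~\ref{lemma:4} and its corollary for the final round, and then charge each agent $(2\entitlement_i/\entitlement_{\minp})\cdot 5\entitlement_{\minp}=10\,\WMMS_i$, giving $20$-\WMMS\ overall. Your write-up in fact makes explicit the termination/invariant bookkeeping that the paper delegates to the discussion preceding its proof of Theorem~\ref{theorem:main}, but the substance is identical.
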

\begin{proof}
As we discussed above, we keep running the layered moving knife algorithm until all chores are assigned. In each round, chores are assigned to only agents of $P$ and each agent is in $P$ in only one round of the algorithm. When an agent $\agent_i$ is in $P$, we put $2\entitlement_i / \entitlement_{\minp}$ copies of that agent in $P'$ each of which will be assigned to a subset of chores that costs at most $5 \entitlement_{\minp}$ to her. Thus, overall, the total cost of chores assigned to agent $\agent_i$ would be bounded by $(2\entitlement_i / \entitlement_{\minp}) 5 \entitlement_{\minp} = 10\entitlement_i = 10 \WMMS_i$ which proves that a 10-\WMMS\ assignment exists for the entitlement-divisible setting. This in addition to a multiplicative factor $2$ lost for reduction to entitlement-divisible setting gives us a proof for the existence of a 20-\WMMS\ assignment.
\end{proof}
	\section{Chore-oblivious Analysis}
In this section, our goal is to establish improved upper bounds on the weighted maximin share by combining approximation guarantees for different regimes of entitlements. To achieve this, we introduce function \(\best(\cdot)\) which allows us to derive these bounds systematically.
 
\begin{definition}\label{def:best}
	For a vector of entitlements \(\langle\entitlement_1, \entitlement_2, \ldots, \entitlement_n\rangle\), we define
	$
	\best(\langle \entitlement_1, \entitlement_2, \ldots, \entitlement_n \rangle)
	$
	as the smallest approximation factor \(\alpha\) such that, for any instance of the asymmetric chore division problem with these entitlements, there exists an \(\alpha\)-$\WMMS$ fair assignment.
\end{definition}

In this section, we use simple upper bounds on \( \best(\cdot) \) along with known approximation guarantees in special cases, to derive more general bounds. These know results are:

\begin{itemize}
	\item For a single agent, we can trivially guarantee $1$-$\WMMS$:
	\begin{equation*}
		\best(\langle 1 \rangle) = 1. \label{eq:one-agent}
	\end{equation*}
	
	\item For two agents with equal entitlements, the  divide-and-choose protocol guarantees that each agent receives a bundle with cost at most their maximin share value:
	\begin{equation*}
		\best(\langle \frac{1}{2}, \frac{1}{2} \rangle) = 1. \label{eq:two-equal}
	\end{equation*}
	
	\item For two agents with arbitrary entitlements,~ \cite{wang2024improved} show that a $\frac{1+\sqrt{3}}{2}$-$\WMMS$ always exists:
	\begin{equation*}
		\best(\langle \entitlement, 1 - \entitlement \rangle) = \frac{1 + \sqrt{3}}{2}. \label{eq:two-general}
	\end{equation*}
	
	\item For three and four agents, Huang et al. \cite{huang2023reduction} established guarantees of \(\frac{15}{13}\)-$\MMS$ for three agents and \(\frac{20}{17}\)-$\MMS$ for four agents:
	\begin{equation*}
		\best(\langle \frac{1}{3}, \frac{1}{3}, \frac{1}{3} \rangle) = \frac{15}{13} \label{eq:three-equal}
\qquad 		\best(\langle \frac{1}{4}, \frac{1}{4}, \frac{1}{4}, \frac{1}{4} \rangle) = \frac{20}{17}. 
	\end{equation*}
\end{itemize}

These results form the foundation for the upper bounds we develop for more general cases. Our strategy in Lemmas~\ref{red1} and~\ref{red2} is to bound \(\best(\langle\entitlement_1,\entitlement_2,\ldots,\entitlement_n\rangle)\) by reducing instances into simpler ones, and then applying one of these boundary cases. %For brevity, the proofs of these lemmas are deferred to the appendix.

\begin{lemma}\label{red1}
	Let \(\entitlement_1, \entitlement_2, \ldots, \entitlement_n\) denote the entitlements of \(n\) agents. Suppose the agents are partitioned into \(n'\) disjoint groups, and for each group \(i \in [n']\), let \(i^* \in [n]\) be the index of a \textbf{unique} designated representative agent (not necessarily in group \(i\)). Let \(\alpha > 1\), and assume the total entitlement in group \(i\) is at most \(\alpha \entitlement_{i^*}\), i.e.,
	\begin{equation}\label{eq:group}
		\sum_{j \in \text{group } i} \entitlement_j \leq \alpha \entitlement_{i^*}.
	\end{equation}
	Then, for \(\beta = 1 / \left( \sum_{i=1}^{n'} \entitlement_{i^*} \right)\), we have
	\[
	\best(\langle\entitlement_1, \entitlement_2, \ldots, \entitlement_n\rangle) \leq \alpha \cdot \best(\langle\beta \entitlement_{1^*}, \beta \entitlement_{2^*}, \ldots, \beta \entitlement_{{n'}^*}\rangle).
	\]
\end{lemma}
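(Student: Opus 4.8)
The plan is to reduce an arbitrary instance $I$ with entitlements $\entitlement_1,\dots,\entitlement_n$ to an instance $I'$ supported on the $n'$ representatives, solve $I'$ by invoking the definition of $\best$, and then lift the solution back to $I$ while paying the factor $\alpha$ through a comparison of weighted maximin shares across the two instances. Concretely, I would fix an arbitrary $I$ with additive cost functions $\cost_1,\dots,\cost_n$ and build the reduced instance $I'$ with exactly $n'$ agents, where the agent representing group $i$ is given entitlement $\beta\entitlement_{i^*}$ (these sum to $1$ by the choice of $\beta$) and keeps the original cost function $\cost_{i^*}$ of its representative. By Definition~\ref{def:best}, $I'$ admits a $\gamma$-$\WMMS$ assignment $\langle S_1,\dots,S_{n'}\rangle$ with $\gamma=\best(\langle\beta\entitlement_{1^*},\dots,\beta\entitlement_{{n'}^*}\rangle)$; that is, $\cost_{i^*}(S_i)\le \gamma\,\WMMS^{I'}_{i^*}$ for each $i$, where $\WMMS^{I'}_{i^*}$ denotes the weighted maximin share of the agent representing group $i$ inside $I'$.

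Next I would lift this assignment to $I$ by giving the whole bundle $S_i$ to the original agent $\agent_{i^*}$ and nothing to every non-representative agent. Because the representatives are distinct (the map $i\mapsto i^*$ is injective) and the bundles $S_i$ partition all chores, this is a valid assignment of $I$: every chore is placed exactly once, each non-representative agent has cost $0$, and each representative $\agent_{i^*}$ incurs cost $\cost_{i^*}(S_i)\le \gamma\,\WMMS^{I'}_{i^*}$.

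The crux — and the only nontrivial step — is the cross-instance comparison $\WMMS^{I'}_{i^*}\le \alpha\,\WMMS^{I}_{i^*}$. Here I would start from an optimal partition $\langle A_1,\dots,A_n\rangle$ realizing $\WMMS^{I}_{i^*}$, so that $\cost_{i^*}(A_j)\le \WMMS^{I}_{i^*}\,\entitlement_j/\entitlement_{i^*}$ for all $j$, and coarsen it to an $n'$-partition of $I'$ by setting $B_k=\bigcup_{j\in\text{group }k}A_j$. Additivity of $\cost_{i^*}$ together with the group-budget hypothesis~\eqref{eq:group} then gives
\[
\cost_{i^*}(B_k)=\sum_{j\in\text{group }k}\cost_{i^*}(A_j)\le \frac{\WMMS^{I}_{i^*}}{\entitlement_{i^*}}\sum_{j\in\text{group }k}\entitlement_j\le \alpha\,\frac{\entitlement_{k^*}}{\entitlement_{i^*}}\,\WMMS^{I}_{i^*},
\]
and since the normalizer $\beta$ cancels in the weighted ratio,
\[
\WMMS^{I'}_{i^*}\le \max_{k\in[n']}\cost_{i^*}(B_k)\,\frac{\beta\entitlement_{i^*}}{\beta\entitlement_{k^*}}\le \alpha\,\WMMS^{I}_{i^*}.
\]

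Combining the two bounds yields $\cost_{i^*}(S_i)\le \alpha\gamma\,\WMMS^{I}_{i^*}$ for every representative and cost $0$ for everyone else, so the lifted assignment is $\alpha\gamma$-$\WMMS$ for $I$; as $I$ was arbitrary, $\best(\langle\entitlement_1,\dots,\entitlement_n\rangle)\le \alpha\gamma$, which is the claim. The main obstacle I anticipate is making the WMMS comparison airtight, and it is worth emphasizing \emph{why} it goes through: the reduced instance deliberately retains each representative's own cost function (so the identical $\cost_{i^*}$ appears on both sides of the inequality), and the coarsening of an optimal partition is charged cleanly against the budget~\eqref{eq:group}. Everything else — validity of the lift and the zero cost on non-representatives — is routine bookkeeping.
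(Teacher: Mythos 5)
Your proposal is correct and follows essentially the same route as the paper's proof: the same reduced instance on the $n'$ representatives with entitlements $\beta\entitlement_{i^*}$ and cost functions $\cost_{i^*}$, the same coarsening of each representative's optimal \WMMS\ partition to establish $\WMMS^{I'}_{i^*} \leq \alpha\,\WMMS^{I}_{i^*}$, and the same lift assigning each bundle to its representative and empty bundles to everyone else. The only cosmetic difference is that you make the cancellation of $\beta$ in the weighted ratio explicit, which the paper handles implicitly in its definition of the super-agents' shares.
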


\begin{proof}
	Let
	$
	\gamma = \best(\langle\beta \entitlement_{1^*}, \ldots, \beta \entitlement_{{n'}^*}\rangle),
	$ and consider an instance with \(n\) agents \(\agent_1, \ldots, \agent_n\), a set of chores \(\chores\), additive cost functions \(\cost_1, \ldots, \cost_n\), and entitlements \(\entitlement_1, \ldots, \entitlement_n\). We construct a new instance with \(n'\) super-agents. Each super-agent \(s_i\), has entitlement \(\beta \entitlement_{i^*}\), and has the cost function \(\cost_{s_i} := \cost_{i^*}\).
	The weighted maximin share for a super-agent \( s_i \), among a total of \( n' \) super-agents, denoted by \( \WMMS_{s_i} \), is defined as follows:
	\begin{equation}\label{eq:wmms-s}
		\WMMS_{s_i} = \min_{\langle A_1, \ldots, A_{n'} \rangle \in \Pi(\chores)} \max_{k \in [n']} \cost_{i^*}(A_k) \cdot \frac{\entitlement_{i^*}}{\entitlement_{k^*}}.
	\end{equation}
	
	Let \(P = \langle B_1, \ldots, B_n \rangle\) be the \(\WMMS\) partition of agent \(\agent_{i^*}\) in the original instance:
	\begin{equation}\label{eq:wmms-i}
		\WMMS_{i^*} = \max_{j \in [n]} \cost_{i^*}(B_j) \cdot \frac{\entitlement_{i^*}}{\entitlement_j}.
	\end{equation}
	
	We build a partition \(P' = \langle A_1, \ldots, A_{n'} \rangle\) for the reduced instance by defining each bundle \(A_k\) as the union of all \(B_j\) such that agent \(\agent_j\) belongs to group \(k\).
	By additivity of costs and using Inequality~\eqref{eq:group} we have:
	\begin{align*}
		\cost_{i^*}(A_k) &\leq \sum_{j \in \text{group } k} \cost_{i^*}(B_j) \\
		&\leq \sum_{j \in \text{group } k} \WMMS_{i^*} \cdot \frac{\entitlement_j}{\entitlement_{i^*}} &\mbox{Inequality \eqref{eq:wmms-i}}\\
		&= \frac{\WMMS_{i^*}}{\entitlement_{i^*}} \cdot \sum_{j \in \text{group } k} \entitlement_j \\
		&\leq \alpha \cdot \frac{\entitlement_{k^*}}{\entitlement_{i^*}} \cdot \WMMS_{i^*} &\mbox{Inequality \eqref{eq:group}}
	\end{align*}
	
	Multiplying both sides by \(\frac{\entitlement_{i^*}}{ \entitlement_{k^*}}\), we obtain
	\[
	\cost_{i^*}(A_k) \cdot \frac{\entitlement_{i^*}}{\entitlement_{k^*}} \leq \alpha \cdot \WMMS_{i^*}.
	\]
	Thus, by Equation~\eqref{eq:wmms-s} we have
$	\WMMS_{s_i} \leq \alpha \cdot \WMMS_{i^*}.
$
	Now, by the definition of \(\gamma\), there exists an assignment $D = \langle D_1,D_2,\ldots,D_{n'}\rangle$ in the reduced instance where:
	\[
	\cost_{s_i}(D_i) \leq \gamma \cdot \WMMS_{s_i} \leq \gamma \alpha \cdot \WMMS_{i^*}.
	\]
	
	We map this assignment back to the original instance by assigning each bundle \(D_i\) to the representative agent of group $i$, i.e., \(\agent_{i^*}\), and assigning empty bundles to all other agents. Then for each representative agent \(\agent_{i^*}\), we have \(\cost_{i^*}(D_i) = \cost_{s_i}(D_i)\leq \gamma \alpha \cdot \WMMS_{i^*}\), and for any non-representative agent \(\agent_j\), \(\cost_j(\emptyset) = 0 \leq \gamma \alpha \cdot \WMMS_j\).	Hence, this assignment is \(\gamma\alpha\)-\(\WMMS\). Therefore, we have	$
	\best(\langle\entitlement_1, \ldots, \entitlement_n\rangle) \leq  \alpha\gamma = \alpha \cdot \best(\langle\beta \entitlement_{1^*}, \ldots, \beta \entitlement_{{n'}^*}\rangle).
	$
	\end{proof}

Lemma~\ref{red2} establishes a relationship between the approximation guarantees in the symmetric and asymmetric settings. Observe that in the symmetric case,  \(\MMS_i = \WMMS_i\) for all agents. This lemma is especially useful when the entitlements are close to each other.

\begin{lemma}\label{red2}
	Let \(\entitlement_1, \entitlement_2, \ldots, \entitlement_n\) be the entitlements of \(n\) agents. Then:
	\[
	\best(\langle\entitlement_1, \entitlement_2, \ldots, \entitlement_n\rangle) \leq \frac{\max_i \entitlement_i}{\min_i \entitlement_i} \cdot \best(\langle \tfrac{1}{n}, \tfrac{1}{n}, \ldots, \tfrac{1}{n} \rangle).
	\]
\end{lemma}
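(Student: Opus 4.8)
The plan is to mirror the structure of the proof of Lemma~\ref{red1}: construct a companion instance whose entitlements are symmetric, invoke the definition of \(\best(\langle \frac1n,\ldots,\frac1n\rangle)\) to obtain a good assignment there, and then argue that the \emph{same} assignment transfers back to the original asymmetric instance at the cost of the multiplicative factor \(\frac{\max_i \entitlement_i}{\min_i \entitlement_i}\). Write \(\alpha = \frac{\max_i \entitlement_i}{\min_i \entitlement_i} \geq 1\) and \(\gamma = \best(\langle \frac1n,\ldots,\frac1n\rangle)\). Given an arbitrary instance with additive cost functions \(\cost_1,\ldots,\cost_n\) and entitlements \(\entitlement_1,\ldots,\entitlement_n\), I would introduce the symmetric instance with the identical cost functions \(\cost_1,\ldots,\cost_n\) but with every entitlement reset to \(\frac1n\). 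Because the cost functions are unchanged, any assignment is legal in both instances and incurs identical costs; the only quantity that changes is the share benchmark, so the whole argument reduces to comparing the share values across the two instances.

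The heart of the proof is the claim that, for every agent \(\agent_i\), the symmetric maximin share (which equals the symmetric \(\WMMS\), as noted before the statement) is at most \(\alpha\) times the asymmetric weighted maximin share, i.e. \(\MMS_i \leq \alpha \cdot \WMMS_i\). To establish this I would fix any partition \(\langle A_1,\ldots,A_n\rangle\) and use that \(\frac{\entitlement_i}{\entitlement_j} \geq \frac{\min_k \entitlement_k}{\max_k \entitlement_k} = \frac{1}{\alpha}\) for every pair \(i,j\). This gives the termwise bound \(\cost_i(A_j)\frac{\entitlement_i}{\entitlement_j} \geq \frac{1}{\alpha}\cost_i(A_j)\), which survives taking the maximum over \(j\) and then the minimum over all partitions. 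The left-hand side becomes exactly \(\WMMS_i\) and the right-hand side becomes \(\frac{1}{\alpha}\MMS_i\), yielding \(\WMMS_i \geq \frac{1}{\alpha}\MMS_i\), equivalently \(\MMS_i \leq \alpha \cdot \WMMS_i\).

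With the share comparison in hand, the remainder is routine. By the definition of \(\gamma\), the symmetric instance admits a \(\gamma\)-\(\WMMS\) assignment \(\langle D_1,\ldots,D_n\rangle\), so \(\cost_i(D_i) \leq \gamma \cdot \MMS_i\) for every \(\agent_i\). Chaining this with the claim gives \(\cost_i(D_i) \leq \gamma \cdot \MMS_i \leq \gamma\alpha \cdot \WMMS_i\), so the very same assignment is \(\gamma\alpha\)-\(\WMMS\) for the original instance. Since the instance was arbitrary, \(\best(\langle\entitlement_1,\ldots,\entitlement_n\rangle) \leq \alpha\gamma = \frac{\max_i \entitlement_i}{\min_i \entitlement_i}\cdot \best(\langle \frac1n,\ldots,\frac1n\rangle)\), as desired.

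I expect the only genuinely delicate point to be the share comparison \(\MMS_i \leq \alpha\,\WMMS_i\): one must be careful that the lower bound \(\frac{\entitlement_i}{\entitlement_j}\geq 1/\alpha\) is uniform in \(j\), so that the inequality is preserved through the \(\max_j\) and \(\min\) over partitions rather than only at a single optimizing partition. Everything else—the equality of cost functions across instances and the transfer of the assignment—is immediate, so this step is where I would concentrate the care.
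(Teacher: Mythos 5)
Your proposal is correct and follows essentially the same route as the paper: both construct the symmetric instance with the same cost functions, establish the key comparison \(\MMS_i \leq \frac{\max_j \entitlement_j}{\min_j \entitlement_j}\cdot\WMMS_i\) via the uniform bound \(\entitlement_i/\entitlement_j \geq \min_k \entitlement_k/\max_k \entitlement_k\), and transfer the \(\gamma\)-\(\MMS\) assignment back unchanged. The only cosmetic difference is that the paper instantiates the bound at the \(\WMMS\)-optimal partition \(\langle B_1,\ldots,B_n\rangle\), whereas you apply it termwise to every partition and pass it through the \(\max_j\) and the minimum over partitions; these are interchangeable phrasings of the same argument.
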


\begin{proof}
	We begin by considering the symmetric setting, where all agents have equal entitlements. Let \(\gamma = \best(\langle \tfrac{1}{n}, \ldots, \tfrac{1}{n} \rangle)\), and let \(\alloc = (A_1, \ldots, A_n)\) be a \(\gamma\)-\(\MMS\) assignment computed in this symmetric instance. By definition of \(\MMS\), we have for every agent \(\agent_i\):
	\begin{equation}
		\cost_i(A_i) \leq \gamma \cdot \MMS_i. \label{eq:gamma-mms}
	\end{equation}
	
	Now consider the same instance but with arbitrary entitlements \(\entitlement_1, \ldots, \entitlement_n\). We relate each agent's unweighted and weighted maximin share. Let \(\WMMS_i\) be the weighted maximin share of agent \(\agent_i\) in the asymetric instance, and let \((B_1, \ldots, B_n)\) be the partition achieving it. Then:
	\begin{align}
		\WMMS_i &= \max_{j \in [n]} \left( \cost_i(B_j) \cdot \frac{\entitlement_i}{\entitlement_j} \right) \nonumber \\
		&\geq  \max_{j \in [n]}\left( \cost_i(B_j) \right) \cdot \left( \frac{\entitlement_i}{\max_j \entitlement_j} \right) \nonumber \\
		&\geq \MMS_i \cdot \left( \frac{\min_j \entitlement_j}{\max_j \entitlement_j} \right). \label{eq:wmms-lb}
	\end{align}
	
	Combining Inequalities~\eqref{eq:gamma-mms} and~\eqref{eq:wmms-lb}, we obtain:
	\[
	\cost_i(A_i) \leq \gamma \cdot \MMS_i \leq \gamma \cdot \WMMS_i \cdot \left( \frac{\max_j \entitlement_j}{\min_j \entitlement_j} \right).
	\]
	
	This shows that assignment \(\alloc\) is also \(\gamma \cdot \tfrac{\max_j \entitlement_j}{\min_j \entitlement_j}\)-\(\WMMS\)  under entitlements $\entitlement_1,\entitlement_2,\ldots,\entitlement_n$.
	Hence,
	\[
	\best(\langle \entitlement_1, \ldots, \entitlement_n \rangle) \leq \best(\langle \tfrac{1}{n}, \ldots, \tfrac{1}{n} \rangle) \cdot \frac{\max_j \entitlement_j}{\min_j \entitlement_j}.
	\]
\end{proof}

 Theorems~\ref{thm:3agent} and~\ref{thm:4agents} establish upper bounds on the approximation guarantee for the weighted maximin share in chore division for three and four agents, respectively.

\begin{theorem}\label{thm:3agent}
	For $3$ agents, there always exists a $$
	\frac{13 + 13\sqrt{3} + \sqrt{2236 + 1898\sqrt{3}}}{52}
	 \mbox{-}\WMMS$$ $(\approx 2.1122$-$\WMMS)$  assignment.
\end{theorem}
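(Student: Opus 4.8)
The plan is to treat the three entitlements $\entitlement_1 \le \entitlement_2 \le \entitlement_3$ (summing to $1$) and, for each configuration, produce two competing upper bounds on $\best(\langle\entitlement_1,\entitlement_2,\entitlement_3\rangle)$ — one sharp when the entitlements are close and one sharp when they are spread out — and then show that their minimum never exceeds the claimed constant. Writing $t = \entitlement_1/\entitlement_3 \in (0,1]$ for the ratio of the smallest to the largest entitlement, I would obtain the two bounds as follows.

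First, I apply Lemma~\ref{red2} with $n=3$. Since $\best(\langle \tfrac13,\tfrac13,\tfrac13\rangle) = \tfrac{15}{13}$ and $\max_i \entitlement_i / \min_i \entitlement_i = \entitlement_3/\entitlement_1 = 1/t$, this immediately gives $\best \le \tfrac{15}{13t} =: f(t)$, which is small exactly when the entitlements are balanced ($t$ close to $1$). Second, I apply Lemma~\ref{red1} with the two groups $\{\agent_1,\agent_3\}$, using designated representative $\agent_3$, and $\{\agent_2\}$, using representative $\agent_2$; the two representatives are distinct, so the uniqueness requirement is met. The group constraint for $\{\agent_1,\agent_3\}$ reads $\entitlement_1+\entitlement_3 \le \alpha\,\entitlement_3$, satisfied with $\alpha = 1+t > 1$, while the singleton constraint $\entitlement_2 \le \alpha\,\entitlement_2$ holds trivially. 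With $\beta = 1/(\entitlement_2+\entitlement_3)$ the reduced instance is the two-agent vector $\langle \beta\entitlement_3, \beta\entitlement_2\rangle$, so using $\best(\langle\entitlement,1-\entitlement\rangle) \le \tfrac{1+\sqrt{3}}{2}$ I obtain $\best \le (1+t)\tfrac{1+\sqrt{3}}{2} =: g(t)$, which is small when the entitlements are spread out ($t$ close to $0$).

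Since both inequalities hold for every instance, $\best \le \min\{f(t),g(t)\}$. The key observation is that $f$ is strictly decreasing and $g$ strictly increasing in $t$, so $\min\{f,g\}$ is maximized at the unique crossing point $t^\ast$ with $f(t^\ast)=g(t^\ast)$; for every other value of $t$ at least one of the two strategies already beats this worst value, hence $\best \le \max_t \min\{f,g\} = f(t^\ast)$. It then suffices to evaluate the common value $V := f(t^\ast)$: eliminating $t$ from $V = \tfrac{15}{13t}$ and $V=(1+t)\tfrac{1+\sqrt{3}}{2}$ yields the quadratic $26V^2 - 13(1+\sqrt{3})V - 15(1+\sqrt{3}) = 0$, whose positive root — after computing the discriminant $169(1+\sqrt{3})^2 + 1560(1+\sqrt{3}) = 2236+1898\sqrt{3}$ — is exactly $V = \tfrac{13+13\sqrt{3}+\sqrt{2236+1898\sqrt{3}}}{52} \approx 2.1122$, as claimed.

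The proof is mechanical once set up, so the conceptual heart is the choice of the two reductions to balance. The one place to be careful is the grouping in the second reduction: pairing the \emph{smallest} with the \emph{largest} agent (rather than, say, the two smallest) is what makes $\alpha = 1 + \entitlement_1/\entitlement_3$ as small as possible, since $\entitlement_1/\entitlement_3$ is the minimum among the three pairwise ratios; a suboptimal grouping would inflate $g$ and hence the final constant. I would therefore verify that the alternative two-agent groupings give weakly larger $\alpha$, confirming that the $g(t)$ above is the best two-agent bound and that no sharper reduction is being overlooked for the purpose of matching this exact constant.
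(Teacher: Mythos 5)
Your proposal is correct and is essentially the paper's own proof: you use the identical two bounds---Lemma~\ref{red2} with the symmetric $\tfrac{15}{13}$ guarantee giving $\tfrac{15}{13}\cdot\tfrac{\entitlement_3}{\entitlement_1}$, and Lemma~\ref{red1} with the grouping $\{\agent_1,\agent_3\},\{\agent_2\}$ (representatives $\agent_3,\agent_2$) plus the two-agent bound $\tfrac{1+\sqrt{3}}{2}$ giving $(1+\entitlement_1/\entitlement_3)\tfrac{1+\sqrt{3}}{2}$---and both arguments reduce to the same quadratic $26V^2-13(1+\sqrt{3})V-15(1+\sqrt{3})=0$. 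The only cosmetic difference is that you maximize $\min\{f,g\}$ at the crossing point via monotonicity, while the paper phrases the same computation as a proof by contradiction against the root $c$ of that quadratic.
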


\begin{proof}
	Let \( k = \frac{\sqrt{3} + 1}{2} \) and define 
	$$ c = \frac{13k + \sqrt{169k^2 + 780k}}{26}. $$ 
	Observe that \( c \) is a root of the quadratic equation 
	\( 13c^2 - 13kc - 15k = 0 \). We show that for any positive entitlements \(\entitlement_1, \entitlement_2, \entitlement_3\) summing to 1, we have
$
\best(\langle \entitlement_1, \entitlement_2, \entitlement_3 \rangle) \leq c.
$
Without loss of generality, we assume \( \entitlement_1 \leq \entitlement_2 \leq \entitlement_3 \). We apply the following two upper bounds on \( \best(\langle\entitlement_1,\entitlement_2,\entitlement_3\rangle) \):
\begin{itemize}
\item Apply Lemma~\ref{red1} to partition \( \{\entitlement_2\}, \{\entitlement_1, \entitlement_3\} \), with \(\agent_2\) and \(\agent_3\)  as the representative agents of the first and the second group, respectively. Next, use the bound \(k\) established by~\cite{wang2024improved} for the two-agent case with entitlements \( \entitlement_2, \entitlement_3 \):
\begin{equation*}
	\best(\langle \entitlement_1, \entitlement_2, \entitlement_3 \rangle)
	\leq \frac{(\entitlement_1 + \entitlement_3)k}{\entitlement_3}.
	\label{eq:b2}
\end{equation*}
	\item Lemma~\ref{red2} provides an upper bound that depends on the ratio between the largest and smallest entitlements, together with the approximation guarantee in the symmetric setting. Using the bound of \( \frac{15}{13} \), established by~\cite{huang2023reduction} for the symmetric case, we get:
	\begin{equation*}
		\best(\langle \entitlement_1, \entitlement_2, \entitlement_3 \rangle)
		\leq \frac{15}{13}\cdot \frac{\entitlement_3}{\entitlement_1}.
		\label{eq:b3}
	\end{equation*}
\end{itemize}
 In the remainder of the proof, we show that for any valid entitlement triple \( (\entitlement_1, \entitlement_2, \entitlement_3) \), at least one of these bounds ensures an approximation ratio of at most \( c \approx 2.1122 \).
		We prove the claim by contradiction. Suppose there exist positive entitlements \( \entitlement_1, \entitlement_2, \entitlement_3 \) summing to 1 such that
	$
	\best(\langle\entitlement_1, \entitlement_2, \entitlement_3\rangle) > c.
	$
	By assumption, both bounds must be strictly greater than \( c \). That gives:
	\begin{align}
		\frac{(\entitlement_1 + \entitlement_3)k}{\entitlement_3} &> c, \label{eq:ineq1} \\
		\frac{15\entitlement_3}{13\entitlement_1} &> c. \label{eq:ineq2}
	\end{align}
	From Inequality~\eqref{eq:ineq1}, we obtain
$
	\entitlement_1 > \entitlement_3\left( \frac{c}{k} - 1 \right)
$, and 
	from Inequality~\eqref{eq:ineq2}, we get the upper bound
	$
	\entitlement_1 < \frac{15\entitlement_3}{13c}.
	$
	Combining the two bounds we derive:
	\[
	\entitlement_3\left( \frac{c}{k} - 1 \right) < \frac{15\entitlement_3}{13c}.
	\]
	Since \( \entitlement_3 > 0 \), we can divide both sides by $\entitlement_3$:
	$
	\frac{c}{k} - 1 < \frac{15}{13c}.
	$
	Multiplying both sides by \( 13ck \), we obtain:
	$
	13c^2 - 13ck < 15k,
$ or equivalently $
	13c^2 - 13ck - 15k < 0.
	$
	However, by definition of $c$, we have $13c^2 - 13ck - 15k =0$. Therefore, at least one of Inequalities \eqref{eq:ineq1} or \eqref{eq:ineq2} must be violated which is a contradiction.
\end{proof}

Figure~\ref{fig:simplexcomparisonplot} shows a partition of the entitlement simplex, illustrating which bound dominates in each region. The heatmap on the right quantifies the exact upper bound obtained for each point.

\begin{figure}
	\centering
	\includegraphics[width=0.8\linewidth]{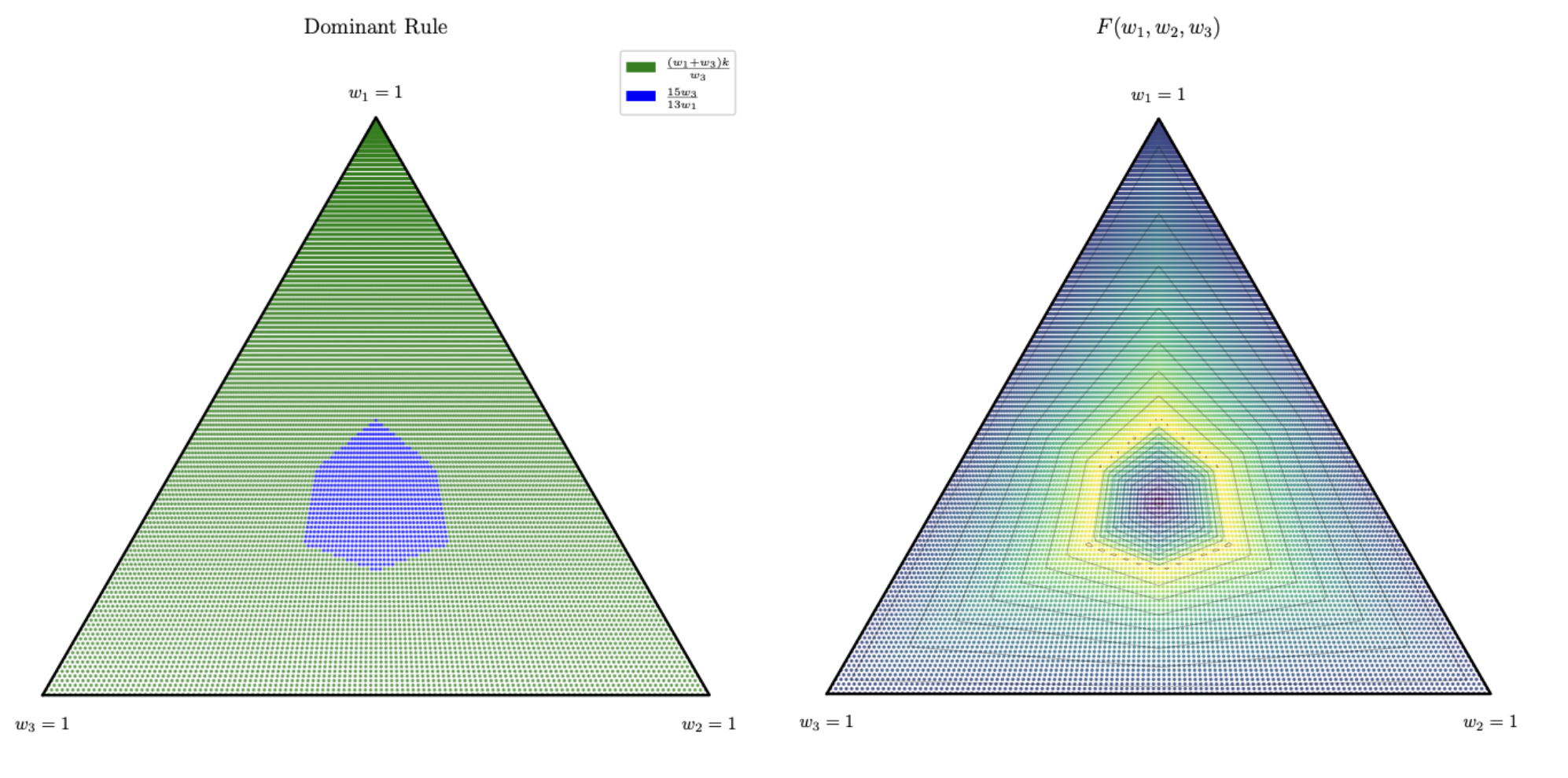}
	\caption{
A visualization of the 2D simplex defined by \( \entitlement_1 + \entitlement_2 + \entitlement_3 = 1 \), where each point represents a triple of entitlements in barycentric coordinates. Triangle vertices correspond to full entitlement for one agent (i.e., \( (1,0,0) \), \( (0,1,0) \), or \( (0,0,1) \)), and interior points represent proportional combinations. The triangle is partitioned by which upper bound on \( \best(\langle \entitlement_1, \entitlement_2, \entitlement_3 \rangle) \) dominates: green for \( \frac{(\entitlement_1 + \entitlement_3)k}{\entitlement_3} \) and blue for \( \frac{15\entitlement_3}{13\entitlement_1} \). On the right is a heatmap of the upper bound across the simplex, with lighter colors indicating larger values.
	}
	\label{fig:simplexcomparisonplot}
\end{figure}

Finally, in Theorem \ref{thm:4agents} we prove an upper bound on $\WMMS$ for four agents. %In the interest of space, we defer the proof of Theorem \ref{thm:4agents} to the appendix.

\begin{theorem}\label{thm:4agents}
	Let $c \approx 2.5404$ be the unique solution to $$\frac{1}{x}+\frac{20}{17x^2} + \frac{40}{17x^2(\frac{2x}{\sqrt{3} +1}-1)} = 1.$$
	For $4$ agents, there always exists a 
	$c-\WMMS$ assignment for the asymmetric chore division problem. 
\end{theorem}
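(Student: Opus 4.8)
The plan is to mimic the structure of the proof of Theorem~\ref{thm:3agent}: set $k=\frac{\sqrt3+1}{2}$, assume without loss of generality that $\entitlement_1\le\entitlement_2\le\entitlement_3\le\entitlement_4$ with $\sum_i\entitlement_i=1$, and argue by contradiction. Supposing $\best(\langle\entitlement_1,\entitlement_2,\entitlement_3,\entitlement_4\rangle)>c$, I would extract from three carefully chosen applications of Lemmas~\ref{red1} and~\ref{red2} three upper bounds on disjoint portions of the total entitlement, and then sum them. Because $c$ is defined precisely so that the three extremal values sum to $1$, the strict inequalities will force $\entitlement_1+\entitlement_2+\entitlement_3+\entitlement_4<1$, contradicting $\sum_i\entitlement_i=1$.

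Concretely, I would use the following three reductions. First, applying Lemma~\ref{red1} with all four agents placed in a single group represented by $\agent_4$ (so $\alpha=1/\entitlement_4$ and the reduced instance is the trivial $\best(\langle1\rangle)=1$) yields $\best\le 1/\entitlement_4$; hence $\entitlement_4<1/c$, the $\tfrac1x$ term. Second, Lemma~\ref{red2} together with the symmetric four-agent bound $\best(\langle\tfrac14,\tfrac14,\tfrac14,\tfrac14\rangle)=\tfrac{20}{17}$ gives $\best\le\frac{\entitlement_4}{\entitlement_1}\cdot\frac{20}{17}$, so $\entitlement_1<\frac{20\,\entitlement_4}{17c}<\frac{20}{17c^2}$, the $\tfrac{20}{17x^2}$ term. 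Third, I would apply Lemma~\ref{red1} with the two groups $\{\agent_1,\agent_3\}$ and $\{\agent_2,\agent_4\}$, represented by $\agent_3$ and $\agent_4$ respectively, reducing to a two-agent instance and invoking the bound $k$ of~\cite{wang2024improved}. In the regime where the group $\{\agent_1,\agent_3\}$ is the binding one, this yields $\best\le\frac{(\entitlement_1+\entitlement_3)k}{\entitlement_3}$, hence $\entitlement_3<\frac{\entitlement_1}{c/k-1}$; since $\entitlement_2\le\entitlement_3$, we obtain $\entitlement_2+\entitlement_3<\frac{2\entitlement_1}{c/k-1}<\frac{40}{17c^2(c/k-1)}$, the third term.

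Summing the three bounds gives $1=\entitlement_1+\entitlement_2+\entitlement_3+\entitlement_4<\frac1c+\frac{20}{17c^2}+\frac{40}{17c^2(c/k-1)}$, and since $c$ is the root of $\frac1x+\frac{20}{17x^2}+\frac{40}{17x^2(2x/(\sqrt3+1)-1)}=1$ (noting $2x/(\sqrt3+1)=x/k$), the right-hand side equals $1$, the desired contradiction. The main obstacle is the third reduction: Lemma~\ref{red1} bounds $\best$ by $\alpha k$ with $\alpha=\max\{1+\entitlement_1/\entitlement_3,\,1+\entitlement_2/\entitlement_4\}$, so $\best>c$ only guarantees that one of the two group ratios exceeds $c/k$. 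I therefore expect to need a short case analysis: when $\{\agent_1,\agent_3\}$ binds, the argument above closes, and when $\{\agent_2,\agent_4\}$ binds instead (equivalently $\entitlement_2\entitlement_3\ge\entitlement_1\entitlement_4$) I would have to show, using $\entitlement_4<1/c$ and $\entitlement_1<\frac{20\entitlement_4}{17c}$ together with the sorting constraints, that this regime cannot coexist with $\best>c$, likely by re-deriving a complementary bound. As a sanity check I would verify that the three extremal values satisfy $\entitlement_1\le\entitlement_2\le\entitlement_3\le\entitlement_4$ at the tight point, where they collapse to $\entitlement_1=\tfrac{20}{17c^2}$, $\entitlement_2=\entitlement_3=\tfrac{20}{17c^2(c/k-1)}$, and $\entitlement_4=\tfrac1c$; confirming $\entitlement_2\entitlement_3<\entitlement_1\entitlement_4$ there shows the binding case is indeed the relevant one on the boundary.
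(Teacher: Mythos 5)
Your three reductions and your analysis of the binding case are exactly the paper's: the same applications of Lemma~\ref{red1} (all agents in one group represented by $\agent_4$; the pairing $\{\agent_1,\agent_3\},\{\agent_2,\agent_4\}$ with representatives $\agent_3,\agent_4$) and of Lemma~\ref{red2} with the symmetric bound $\tfrac{20}{17}$, and the same chain $\entitlement_4<\tfrac1c$, $\entitlement_1<\tfrac{20}{17c^2}$, $\entitlement_2+\entitlement_3<\tfrac{40}{17c^2(c/k-1)}$, summed against $\sum_i\entitlement_i=1$. This coincides with the paper's Case~1 (the case $\entitlement_1/\entitlement_3\ge\entitlement_2/\entitlement_4$), and your derivation there is sound.

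The genuine gap is the complementary case $\entitlement_1/\entitlement_3<\entitlement_2/\entitlement_4$, which you explicitly leave open (``likely by re-deriving a complementary bound''). This is not a routine omission: that case provably cannot be closed with the three bounds you have assembled. When $\{\agent_2,\agent_4\}$ binds, your third reduction only gives $\best(\cdot)\le\frac{\entitlement_2+\entitlement_4}{\entitlement_4}\cdot k\le 2k\approx 2.73>c$, so assuming $\best(\cdot)>c$ yields merely the lower bound $\entitlement_2/\entitlement_4>c/k-1\approx0.86$, and all your derived constraints are then simultaneously satisfiable: for instance $(\entitlement_1,\entitlement_2,\entitlement_3,\entitlement_4)=(0.16,\,0.27,\,0.27,\,0.30)$ is sorted, sums to $1$, satisfies $\entitlement_4<1/c$, $\entitlement_1<\tfrac{20}{17c^2}$, $\entitlement_2/\entitlement_4=0.9>c/k-1$, and lies in the second case, so no contradiction with $\sum_i\entitlement_i=1$ can be extracted. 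Your sanity check at the tight point only verifies that the boundary optimum lies in the first case; it says nothing about interior points of the second. The missing ingredient is the one the paper builds into Inequality~\eqref{eq:assum2} as a minimum: apply Lemma~\ref{red2} to the reduced two-agent instance to get the alternative factor $\frac{\entitlement_4}{\entitlement_3}$, i.e.
\begin{equation*}
\best(\langle\entitlement_1,\entitlement_2,\entitlement_3,\entitlement_4\rangle)\le\frac{\entitlement_2+\entitlement_4}{\entitlement_4}\cdot\frac{\entitlement_4}{\entitlement_3}.
\end{equation*}
Then, in the second case, supposing $\best(\cdot)>2.5$ the $k$-bound forces $\entitlement_2/\entitlement_4>0.83$, and since $\entitlement_3\ge\entitlement_2$ the displayed bound gives $\best(\cdot)\le 1+\entitlement_4/\entitlement_2\le 1+1/0.83<2.3$, a contradiction; hence $\best(\cdot)\le 2.5<c$ in that regime, which is exactly the paper's Case~2.
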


\begin{proof}
Observe that for $x > \frac{\sqrt{3} +1}{2}$ the value of $\frac{1}{x}+\frac{20}{17x^2} + \frac{40}{17x^2(\frac{2x}{\sqrt{3} +1}-1)}$ decreases as we increase $x$ and moreover for $x < \frac{\sqrt{3} +1}{2}$ the value of $\frac{1}{x}+\frac{20}{17x^2} + \frac{40}{17x^2(\frac{2x}{\sqrt{3} +1}-1)}$ is negative. Therefore, $\frac{1}{x}+\frac{20}{17x^2} + \frac{40}{17x^2(\frac{2x}{\sqrt{3} +1}-1)}=1$ has a unique solution at $c \approx 2.5404$.

 We show that for any positive set of entitlements $ 0 < \entitlement_1,  \entitlement_2 ,  \entitlement_3 ,  \entitlement_4 < 1$ such that $\entitlement_1 +  \entitlement_2 +  \entitlement_3 +  \entitlement_4 =1$, we have $\best(\langle \entitlement_1, \entitlement_2, \entitlement_3, \entitlement_4\rangle) \leq c$. Without loss of generality, we assume  $0 < \entitlement_1 \leq  \entitlement_2 \leq  \entitlement_3 \leq  \entitlement_4 < 1$. To this end, we leverage three upper bounds derived from Lemmas \ref{red1} and \ref{red2}. 
	\begin{itemize}
		\item We apply Lemma \ref{red1} using partition $\{\entitlement_1,\entitlement_2,\entitlement_3,\entitlement_4\}$ and $\agent_{1^*} = \agent_4$ and then leverage $\best(\langle 1 \rangle) =1$ to obtain an upper bound of 
		\begin{equation}
			\best(\langle\entitlement_1,\entitlement_2,\entitlement_3,\entitlement_4\rangle) \leq \frac{\entitlement_1 + \entitlement_2 + \entitlement_3 + \entitlement_4}{\entitlement_4} = \frac{1}{\entitlement_4}  \label{eq:assum1}.
		\end{equation}		
		\item We apply Lemma~\ref{red1} using partition $ \{\entitlement_1, \entitlement_3\}, \{\entitlement_2, \entitlement_4\}$, $\agent_{1^*} = \agent_3$, and $\agent_{2^*} = \agent_4$. We then leverage Lemma~\ref{red2} and the upper bound of $(\sqrt{3}+1)/2$ established by~\cite{wang2024improved} for two agents to obtain:
		\begin{equation}
			\best(\langle\entitlement_1,\entitlement_2,\entitlement_3,\entitlement_4\rangle) \leq \max\{\frac{\entitlement_1 + \entitlement_3}{\entitlement_3}, \frac{\entitlement_2 + \entitlement_4}{\entitlement_4}\} \cdot \min\{\frac{\entitlement_4}{\entitlement_3}, (\sqrt{3}+1)/2\}. \label{eq:assum2} 
		\end{equation}
		\item We also apply Lemma~\ref{red2} to the upper bound of \( \frac{20}{17} \) established for four agents for the symmetric case by~\cite{huang2023reduction} to obtain:
		\begin{equation}
			\best(\langle\entitlement_1,\entitlement_2,\entitlement_3,\entitlement_4\rangle) \leq  \frac{\entitlement_4}{\entitlement_1} \cdot \frac{20}{17} \label{eq:assum3}.
		\end{equation}
	\end{itemize}

We consider two cases in our analysis. (i) $\entitlement_1 / \entitlement_3 \geq \entitlement_2 / \entitlement_4$ and (ii) $\entitlement_1 / \entitlement_3 < \entitlement_2 / \entitlement_4$. 

\textbf{Case 1} ($\entitlement_1 / \entitlement_3 \geq \entitlement_2 / \entitlement_4$): In this case, Inequality~\eqref{eq:assum2} can be simplified to 
\begin{equation}
	\best(\langle\entitlement_1,\entitlement_2,\entitlement_3,\entitlement_4\rangle) \leq \frac{\entitlement_1 + \entitlement_3}{\entitlement_3} \cdot \frac{\entitlement_4}{\entitlement_3}. \label{eq:assum21} 
\end{equation}
Now, assume for the sake of contradiction that $\best(\langle \entitlement_1,  \entitlement_2 ,  \entitlement_3 ,  \entitlement_4  \rangle) > c$. Thus, by Inequality~\eqref{eq:assum1} we can imply:
\begin{equation}
	\entitlement_4 < 1/c. \label{case1eq1}
\end{equation}
Moreover, combining Inequalities~\eqref{case1eq1} and~\eqref{eq:assum3} implies 
\begin{equation}
	\entitlement_1 < \frac{20}{17c^2}. \label{case1eq2}
\end{equation}
Finally, combining Inequalities~\eqref{case1eq1}, ~\eqref{case1eq2}, and ~\eqref{eq:assum21} implies
\begin{equation}
	\entitlement_3 < \frac{20}{17c^2(\frac{2c}{\sqrt{3} +1}-1)}. \label{case1eq3}
\end{equation}
Since $\entitlement_2 \leq \entitlement_3$ and $\entitlement_1 + \entitlement_2 + \entitlement_3 + \entitlement_4 = 1$, we have $\entitlement_1 + 2\entitlement_3 + \entitlement_4 \geq 1$ and therefore by Inequalities~\eqref{case1eq1}, ~\eqref{case1eq2}, and ~\eqref{case1eq3} we have $$\frac{1}{c}+\frac{20}{17c^2} + \frac{40}{17c^2(\frac{2c}{\sqrt{3} +1}-1)} > 1.$$
which contradicts the assumption that $c$ is the unique solution to $\frac{1}{x}+\frac{20}{17x^2} + \frac{40}{17x^2(\frac{2x}{\sqrt{3} +1}-1)} = 1$.

\textbf{Case 2} ($\entitlement_1 / \entitlement_3 < \entitlement_2 / \entitlement_4$): In this case, Inequality~\eqref{eq:assum2} can be simplified to 
\begin{equation}
			\best(\langle\entitlement_1,\entitlement_2,\entitlement_3,\entitlement_4\rangle) \leq  \frac{\entitlement_2 + \entitlement_4}{\entitlement_4} \cdot \frac{\entitlement_4}{\entitlement_3} \label{eq:assum22.1} 
\end{equation}
and 
\begin{equation}
	\best(\langle\entitlement_1,\entitlement_2,\entitlement_3,\entitlement_4\rangle) \leq  \frac{\entitlement_2 + \entitlement_4}{\entitlement_4} \cdot (\sqrt{3}+1)/2. \label{eq:assum22.2} 
\end{equation}
We show that in this case $\best(\langle\entitlement_1,\entitlement_2,\entitlement_3,\entitlement_4\rangle) < 2.5$ holds as well ($2.5$ is smaller than $c$). To this end, assume for the sake of contradiction that $\best(\langle\entitlement_1,\entitlement_2,\entitlement_3,\entitlement_4\rangle) > 2.5$. By Inequality~\eqref{eq:assum22.2} we have: 
$$\entitlement_2 / \entitlement_4 > \frac{2.5}{(\sqrt{3}+1)/2} -1 > 0.83. $$
Notice that $\entitlement_3 \geq \entitlement_2$ and thus $\frac{\entitlement_4}{\entitlement_3} \leq \frac{\entitlement_4}{\entitlement_2}$. We define $\alpha =  \entitlement_2 / \entitlement_4$ and by Inequality~\eqref{eq:assum22.1} we have  $$\best(\langle\entitlement_1,\entitlement_2,\entitlement_3,\entitlement_4\rangle) \leq (1+\alpha) \frac{1}{\alpha}$$ which is a decreasing function for positive $\alpha$ and since $\alpha \geq 0.83$ we have $\best(\langle\entitlement_1,\entitlement_2,\entitlement_3,\entitlement_4\rangle) \leq 1.83/0.83 \leq 2.3$ which contradicts our assumption.
\end{proof}

We emphasize that unlike Theorem~\ref{thm:3agent}, our chore-oblivious analysis for Theorem~\ref{thm:4agents} is not tight and can be improved by a small margin at the expense of adding extra complications to the proof. In fact, our empirical results show that the optimal upper bound that can be obtained via the chore-oblivious technique for $n=4$ is approximately equal to $2.5275$. In our empirical results, for each $3 \leq n \leq 10$, we sample a billion random set of entitlements and use Lemmas~\ref{red1} and ~\ref{red2} to find an upper bound for $\best(\langle \entitlement_1, \entitlement_2, \ldots, \entitlement_n \rangle)$ for the sampled set. We take the maximum over all samples and illustrate the results below.
\begin{figure}[h]
	\centering
	\begin{minipage}{0.8\textwidth}
		\centering

		\begin{tikzpicture}
			\begin{axis}[
				xlabel={$n$},
				ylabel={ Value},
				legend style={at={(1,1.2)}, anchor=east},
				grid=major,
				xtick={3,...,10},
				width=\textwidth,
				height=5cm,
				ymajorgrids=true,
				xmajorgrids=true,
				title style={font=\footnotesize}
				]
				\addplot+[mark=*, thick] table {
					n  f
					3  \vthree
					4 \vfour
					5 \vfive
					6 \vsix
					7 \vseven
					8 \veight
					9 \vnine
					10 \vten
				};
				\addlegendentry{Chore-oblivious bound}
				
				\addplot+[mark=square*, thick] table {
					n  g
					3 2.584
					4 3.0
					5 3.321
					6 3.584
					7 3.807
					8 4.0
					9 4.169
					10 4.321
				};
				\addlegendentry{$\log n+1$}
			\end{axis}
		\end{tikzpicture}
	\end{minipage}
	\hfill
	\begin{minipage}{\textwidth}
		\centering
		\footnotesize
		\renewcommand{\arraystretch}{1.2}
\begin{tabular}{|c|c|c||c|c|c|}
	\hline
	\multicolumn{3}{|c||}{$n = 3,4,5,6$} & \multicolumn{3}{c|}{$n = 7,8,9,10$} \\
	\hline
	$n$ & Bound & $\log n+1$ & $n$ & Bound & $\log n+1$ \\
	\hline
	3 & \vthree & 2.584 & 7 & \vseven & 3.807 \\
	4 & \vfour  & 3.0   & 8 & \veight & 4.0   \\
	5 & \vfive  & 3.321 & 9 & \vnine  & 4.169 \\
	6 & \vsix   & 3.584 & 10& \vten   & 4.321 \\
	\hline
\end{tabular}
	\end{minipage}
	\caption{Comparison of chore-oblivious bounds.}
	\label{fig:bound_comparison}
\end{figure}

	\bibliographystyle{plain}
	\bibliography{draft}
	\appendix
	% Add appendix files here if needed
	
\end{document}